\PassOptionsToPackage{dvipsnames}{xcolor}
\documentclass[acmsmall,screen,nonacm]{acmart}\setcopyright{none}

\usepackage{booktabs}   %
\usepackage{subcaption} %
\usepackage[dvipsnames]{xcolor}
\usepackage{bcprules}
\usepackage{collcell}
\usepackage{listings}
\usepackage[scaled]{beramono}
\usepackage{lipsum}
\usepackage{mathpartir}

\usepackage{mathtools}
\usepackage{mdframed}
\usepackage{amsmath}
\usepackage{stmaryrd}
\usepackage{varioref}
\usepackage{cleveref}
\usepackage{scalerel}
\usepackage{xspace}
\usepackage{calc}
\usepackage{array,varwidth}
\usepackage{makecell}
\usepackage{diagbox}
\usepackage{float}
\usepackage{xfrac}
\usepackage{enumitem}
\usepackage[skip=2.5pt]{caption}
\usepackage{slashbox}
\usepackage{xifthen}
\usepackage{tikz}
\usepackage{tikz-cd}
\usetikzlibrary{arrows}
\usetikzlibrary{fit}
\usepackage{tikz-layers}
\usepackage{cancel}
\usepackage{bm}
\usepackage[export]{adjustbox}
\usepackage{rulelinks}

\usetikzlibrary{arrows, arrows.meta, shapes, positioning}
\tikzset{nodearrow/.style={black, ->, >=myarrow},
myarrow/.tip={Latex[width=1mm, length=1mm]},
tracked/.style={draw=black, fill=blue!10},
locs/.style={draw, circle, tracked, opacity=1},
biglocs/.style={locs, minimum size = 12pt},
shared/.style={fill=yellow!20},
circular locs/.style={locs, fill=purple!10},
shared locs/.style={locs, shared},
lambda/.style={draw, cloud, text centered, cloud puffs=15, aspect=2.5},
untracked lambda/.style={lambda, fill=gray!10, dash pattern=on 5pt off 2pt},
tracked lambda/.style={lambda, tracked},
shared lambda/.style={lambda, shared},
undirected/.style={black, -, >=myarrow},
loop/.style={black, ->, >=myarrow, looseness=8},
arena/.style={draw, rectangle, thick, fill=orange!10, opacity=0.5, inner sep=0.05cm},
bigarena/.style={arena, inner sep=0.2cm},
}

\usetikzlibrary{shadows}
\usetikzlibrary{shadows.blur}
\usetikzlibrary{decorations.pathreplacing} %
\tikzcdset{arrow style=tikz,
           diagrams={>=stealth}
         }    %
\tikzcdset{crossing over clearance=.75ex}
\tikzset{hardd/.style={teal,thick}} %
\tikzset{softd/.style={teal, dashed,thick}} %
\tikzset{datad/.style={black,densely dotted}} %
\tikzset{alld/.style={black,thick}} %
\tikzset{expnode/.style={
    asymmetrical rectangle,rounded corners,draw,fill=white,inner sep=2.5pt,
    font=\footnotesize, %
  }} %
\tikzset{blknode/.style={
    asymmetrical rectangle,sharp corners,draw,violet,fill=violet!10,
    font=\footnotesize %
  }} %
\tikzset{dummynode/.style={fill=none,draw=none,asymmetrical rectangle} %
} %
\tikzset{nmute/.style={opacity=.4,blur shadow={shadow opacity=.4}}} %
\tikzset{expn/.style={
    circle,fill=white,draw,minimum size=5pt,inner sep=0pt,outer sep=0pt,
    font=\footnotesize, %
    label={[font=\footnotesize]left:#1}
  }} %
\tikzset{expn/.default={}}
\tikzset{mexpn/.style={opacity=.4,
    circle,fill=white,draw,minimum size=5pt,inner sep=0pt,outer sep=0pt,
    font=\footnotesize, %
    label={[font=\footnotesize,opacity=.4]left:#1}
  }} %
\tikzset{mexpn/.default={}}
\tikzset{rmexpn/.style={opacity=.4,
    circle,fill=white,draw,minimum size=5pt,inner sep=0pt,outer sep=0pt,
    font=\footnotesize, %
    label={[font=\footnotesize,opacity=.4]right:#1}
  }} %
\tikzset{rmexpn/.default={}}
\tikzset{mexpndummy/.style={opacity=.4,
    circle,fill=white,minimum size=5pt,inner sep=0pt,outer sep=0pt,
    label={[font=\footnotesize,opacity=.4]left:#1}
  }} %
\tikzset{mexpndummy/.default={}}

\usepackage{array,varwidth}
\usepackage{mathpartir}
\usepackage{makecell}
\usepackage{enumitem}
\usepackage[skip=2.5pt]{caption}
\usepackage{xifthen}
\usepackage{tikz}
\usetikzlibrary{arrows}

\makeatletter
\def\arcr{\@arraycr}
\makeatother

\definecolor{dark-cyan}{HTML}{135579}
\definecolor{magenta}{HTML}{a8264f}

\lstdefinelanguage{PolyRT}%
{morekeywords={abstract,%
  case,catch,char,class,function,%
  def,else,extends,final,finally,for,%
  if,import,implicit,%
  match,module,%
  new,null,%
  object,override,%
  package,private,protected,public,%
  for,public,return,super,%
  this,throw,trait,try,type,%
  val,var,%
  while,%
  yield,%
  let,end,%
	in,fun,alloc,%
  at,withRef,scoped,%
  any, typeof
  },%
  mathescape=true,%
  sensitive,%
  keywordstyle={\color{dark-cyan}\bf\ttfamily},%
  commentstyle=\color{dark-cyan},%
  escapebegin=\color{dark-cyan},%
  morecomment=[l]//,%
  morecomment=[s]{/*}{*/},%
  morecomment=[s][\color{dark-cyan}]{@}{\ },%
  morestring=[b]",%
  morestring=[b]',%
  showstringspaces=false%
}[keywords,comments,strings]%

\makeatletter
\def\ifenv#1{
   \def\@tempa{#1}%
   \ifx\@tempa\@currenvir
      \expandafter\@firstoftwo
    \else
      \expandafter\@secondoftwo
   \fi
}
\makeatother

\newcommand{\Type}[1]{\ensuremath{\ifenv{lstlisting}{\texttt{#1}}{\mathsf{#1}}}}

\newcommand{\ty}[2][]{\ensuremath{\ifthenelse{\isempty{#1}}{#2}{#2^{\,#1}}}}

\newcommand{\TVar}{\Type{Var}}

\newcommand{\TTop}{\Type{Top}}

\newcommand{\twithr}[3][]{\ifthenelse{\isempty{#1}}{\mathsf{ref}~#2~\mathsf{as}~x~\mathsf{in}~#3}{\mathsf{ref}~#2~\mathsf{as}~#1~\mathsf{in}~#3}}

\newcommand{\transform}[1]{\llbracket #1 \rrbracket}
\newcommand{\sourcecode}[1]{{\color{dark-cyan}#1}}

\newcommand{\transformsource}[1]{\transform{$ \sourcecode{\lstinline|#1|} $ }}
\newcommand{\continuation}{\mathscr{k}}
\newcommand{\continuationof}[2][]{\ifthenelse{\isempty{#1}}{\color{black}{\continuation[}#2\color{black}{]}}{\color{black}{\continuation_{#1}[}#2\color{black}{]}}}

\newcommand{\ts}[1][]{\ensuremath{\ifthenelse{\isempty{#1}}{\,\vdash\,}{\,\vdash^{\,#1}\,}}}

\newcommand{\cx}[2][]{\ensuremath{\ifthenelse{\isempty{#1}}{#2}{#2^{\,#1}}}}

\providecommand{\G}{G}
\renewcommand{\G}[1][]{\cx[#1]{\Gamma}}

\newcommand{\WF}[1]{\ensuremath{#1\ \mathsf{ok}}}

\newcommand{\qtrans}[2][]{\ensuremath{\ifthenelse{\isempty{#1}}{#2\mathord{*}}{{#2}^{#1}}}}

\newcommand{\BOX}[1]{\fbox{$\strut #1$}}

\newcounter{typerule}
\crefformat{typerule}{#2\textit{#1}#3}
\Crefformat{typerule}{#2\textit{#1}#3}

\newcommand{\CX}[3][black]{\ensuremath{{\color{#1}#2\ifthenelse{\isempty{#3}}{}{\hole{{\color{black}#3}}}}}}

\newcommand{\hole}[1]{\ensuremath{[\,#1\,]}}

\newcommand{\ie}{{\em i.e.}\xspace}

\newcommand{\Fsub}{\ensuremath{\mathsf{F}_{<:}}\xspace}

\newcommand{\eg}{{\em e.g.}\xspace}
\newcommand{\judgement}[2]{{\textsf{\textbf{#1}}} \hfill #2}

\newcommand{\DOM}{\fun{dom}}

\newcommand{\fun}[1]{\operatorname{#1}}

\newcommand{\TBot}{\Type{Bot}}
\newcommand{\TInt}{\Type{Int}}
\newcommand{\TBool}{\Type{Bool}}

\newcommand{\TDom}[1]{\ensuremath{\Type{Dom}(#1)}}
\newcommand{\TRange}[1]{\ensuremath{\Type{Range}(#1)}}

\newcommand{\TLam}[2]{\ensuremath{\Lambda #1 . {#2}}}
\newcommand{\TApp}[2]{\ensuremath{{#1}\ [ {#2} ] } }
\newcommand{\TFun}[2]{\ensuremath{{#1} \to {#2}}}
\newcommand{\TAll}[3][]{ \ifthenelse{\isempty{#1}}{\ensuremath{\forall {X} <: {#2} .\, {#3}}}{\ensuremath{\forall {#1} <: {#2} .\, {#3}}}}
\newcommand{\substty}[3][]{\ifthenelse{\isempty{#1}}{\ensuremath{#3[X \mapsto #2]}}{\ensuremath{#3[#1 \mapsto #2]}}}

\newcommand{\ctxext}[2]{\ensuremath{#1,\, #2}}

\newcommand{\stp}[3]{\ensuremath{#1 \vdash #2 <: #3 }}
\newcommand{\hastype}[3]{\ensuremath{#1 \vdash #2 : #3 }}

\newcommand{\domranlang}{\ensuremath{\mathsf{F}^{DR}_{<:}}\xspace}

\newcommand{\defeq}{\ensuremath{\overset{\mathrm{def}}{=}}}
\newcommand{\slleq}{\ensuremath{{\leq_s}}}

\newcommand{\slroot}{\ensuremath{\pathselection{\bullet}}}
\newcommand{\sldom}{\ensuremath{\mathsf{dom}}}
\newcommand{\slrange}{\ensuremath{\mathsf{ran}}}
\newcommand{\slappend}[2]{\ensuremath{ {#2}.{#1}}}
\newcommand{\slprepend}[2]{\ensuremath{ {#1}[\slroot\mapsto \slroot.{#2}] }}

\newcommand{\vl}[2]{\ensuremath{\langle #1,\, #2 \rangle}}
\newcommand{\vabs}[3][]{\ifthenelse{\isempty{#1}}{\ensuremath{\vl{#2}{\lambda x. #3}} }{\ensuremath{\vl{#2}{\lambda #1. #3}} }}
\newcommand{\vtabs}[3][]{\ifthenelse{\isempty{#1}}{\ensuremath{\vl{#2}{\Lambda X. #3}} }{\ensuremath{\vl{#2}{\Lambda #1. #3} } }}

\definecolor{pathcolor}{rgb}{0.80, 0.25, 0.35}
\newcommand{\pathselection}[1]{{\color{pathcolor!100}#1}}
\newcommand{\possl}[1]{\pathselection{\ensuremath{pos~{#1}\,}}}

\newcommand{\slplus}[1]{\pathselection{{#1}+}}
\newcommand{\slminus}[1]{\pathselection{{#1}-}}

\newcommand{\tevaln}[3]{\ensuremath{\vl{#1}{#2} \Downarrow #3 }}

\newcommand{\valsl}[2]{\ensuremath{\langle \pathselection{#2}, \, #1 \rangle }}
\newcommand{\withpath}[2]{\ensuremath{{#1}^{\raisebox{0.1ex}{$\scriptstyle  \,\pathselection{#2}$}}}}
\newcommand{\valtype}[3][]{\ensuremath{ \withpath{\mathbb{V} \llbracket {#2} \rrbracket}{#3}_{\ifthenelse{\isempty{#1}}{\rho}{#1}} }}
\newcommand{\valtypeset}[2][]{\ensuremath{ \{ \ifthenelse{\isempty{#1}}{v}{#1} \mid #2 \} }}
\newcommand{\vstp}[2]{\ensuremath{ {#1} \,\slleq\, {#2} }}
\newcommand{\likeFunctionType}[1]{ \ensuremath{ \text{LF}(#1)} }

\newcommand{\envs}[2]{\ensuremath{ \langle #1, \mathcal{#2} \rangle }}
\newcommand{\envtype}[1]{\ensuremath{\mathbb{G} \llbracket {#1} \rrbracket }}
\newcommand{\envtypeset}[2][]{\ensuremath{\{ \ifthenelse{\isempty{#1}}{\envs{H}{\rho}}{\langle #1 \rangle} \mid #2  \}}}

\newcommand{\exps}[3]{\ensuremath{ \langle #1, \mathcal{#2}, #3 \rangle }}
\newcommand{\exptype}[1]{\ensuremath{\mathbb{E}\llbracket {#1} \rrbracket }}

\newcommand{\semtype}[3]{\ensuremath{ {#1} \models #2 : #3 }}
\newcommand{\semstp}[3]{\ensuremath{ {#1} \models #2 <: #3 }}

\newcommand{\TFst}[1]{\ensuremath{\Type{Fst}(#1)}}
\newcommand{\TSnd}[1]{\ensuremath{\Type{Snd}(#1)}}
\newcommand{\slfst}{\ensuremath{\mathsf{fst}}}
\newcommand{\slsnd}{\ensuremath{\mathsf{snd}}}
\newcommand{\tpair}[2]{\ensuremath{\mathsf{pair}~#1~#2}}
\newcommand{\tfst}[1]{\ensuremath{\mathsf{fst}~#1}}
\newcommand{\tsnd}[1]{\ensuremath{\mathsf{snd}~#1}}
\newcommand{\TPair}[2]{\ensuremath{{#1} \times {#2}}}
\newcommand{\vpair}[2]{\ensuremath{\mathsf{pair}~#1~#2}}
\newcommand{\likePairType}[1]{ \ensuremath{ \text{LP}(#1)} } %

\definecolor{kindcolor}{RGB}{35, 140, 135}  %
\newcommand{\kinding}[1]{{\color{kindcolor!100}#1}}

\definecolor{dkcolor}{RGB}{130, 50, 220} %

\newcommand{\hksymbol}{\ensuremath{\mathrel{{ \color{kindcolor!100} ::}}}}

\newcommand{\kind}[1]{\kinding{\mathsf{#1}}}
\newcommand{\kindnocolor}[1]{\mathsf{#1}}

\newcommand{\TImg}[2][]{\ensuremath{\ifthenelse{\isempty{#1}}{\Type{Image}(#2)}{\Type{Image}(#2)[#1]}}}
\newcommand{\TAbs}[3][]{ \ifthenelse{\isempty{#1}}{\ensuremath{\lambda {X}\hksymbol\kind{#2} .\, {#3} }}{\ensuremath{\lambda {#1}\hksymbol\kind{#2} .\, {#3} }}}

\newcommand{\Dt}{\Delta_{T}}
\newcommand{\Dknocolor}{\Delta_{\kindnocolor{K}}}
\newcommand{\Dk}{\Dknocolor}
\newcommand{\ctxdelta}[2]{[#1 \mid  #2]}

\newcommand{\Dkext}[3][]{\ensuremath{#2,\,\ifthenelse{\isempty{#1}}{X}{#1} \hksymbol \kind{#3} }}

\newcommand{\FOmega}{\ensuremath{\mathsf{F}^{\,\omega}}\xspace}
\newcommand{\FOmegaSub}{\ensuremath{\mathsf{F}^{\,\omega}_{<:}}\xspace}
\newcommand{\hkdomranlang}{\ensuremath{\mathsf{F}^{DR}_{\omega <:}}\xspace}

\newcommand{\hkwithkind}[2]{\ensuremath{{#1}_{\raisebox{0ex}{$\scriptscriptstyle  \,\kind{#2}$}}}}
\newcommand{\valkind}[1][]{\ensuremath{\ifthenelse{\isempty{#1}}{ \hkwithkind{\mathbb{V}}{{\kind{K}}} }{ \hkwithkind{\mathbb{V}}{{\kind{#1}}} }}}

\newcommand{\hkwithpathkind}[3]{\ensuremath{ \withpath{{#1}}{#2}_{\hspace{0.1ex}\raisebox{0ex}{$\scriptscriptstyle  \,\pathselection{\kind{#3}}$}} } }

\newcommand{\VVar}[3][]{ \ensuremath{\ifthenelse{\isempty{#1}}{ \hkwithkind{{#2}}{\kind{#3}} }{ \hkwithpathkind{#2}{#1}{#3} } } }

\newcommand{\hkvaltype}[3][]{\ensuremath{
  \ifthenelse{\isempty{#1}}{ 
    {\mathbb{V} \llbracket {#2} \rrbracket}_{ #3}
  }{
    \withpath{{\mathbb{V} \llbracket {#2} \rrbracket}}{#1}_{ #3}
  }  
}}

\newcommand{\wfdelta}[2]{\ensuremath{\WF{\ctxdelta{\Dt}{\Dk}}}}

\makeatletter%
\begin{document}

\title{Let Functions Speak: Lightweight Parametric Polymorphism via Domain and Range Types}

\author{Siyuan He}
\orcid{0009-0002-7130-5592}
\affiliation{%
  \institution{Purdue University}
  \city{West Lafayette}
  \country{USA}
}
\email{he662@purdue.edu}

\author{Songlin Jia}
\orcid{0009-0008-2526-0438}
\affiliation{%
  \institution{Purdue University}
  \city{West Lafayette}
  \country{USA}
}
\email{jia137@purdue.edu}

\author{Tiark Rompf}
\orcid{0000-0002-2068-3238}
\affiliation{%
  \institution{Purdue University}
  \city{West Lafayette}
  \country{USA}
}
\email{tiark@purdue.edu}

\begin{abstract}

Subtyping provides polymorphism at a low cost: 
a concise and intuitive type such as $\TFun{\TInt}{\TTop}$ covers every function that accepts an integer, whatever it returns,
whereas the parametric alternative $\forall B.\,\TFun{\TInt}{B}$ spends a quantifier and a type parameter to express the same.
This congnitive overhead compounds as types nest. %
The level of polymorphism obtained via subtyping, however, is not always enough, as upcasting to $\TFun{\TInt}{\TTop}$ forgets the result type,
so a program that needs the precise result must abandon the concise signature and return to full parameterization.
The missing middle ground is a lightweight form of polymorphism in the spirit of Hofmann and Pierce's type destructors 
and TypeScript's \texttt{Parameters<T>} and \texttt{ReturnType<T>}, which keeps the concise, readable types of subtyping yet 
recovers the precision of parametricity.
However, Hofmann and Pierce's original work did not address contravariant positions such as function arguments in arrow types and
TypeScript's utility types rely on the unsound \texttt{any} type.

To address this gap, we present \domranlang, a conservative extension of System \Fsub with first-class \emph{domain} and \emph{range} projection types,
\TDom{T} and \TRange{T},
and an application rule that types $f(x)$ from $f : F$ and $x : \TDom{F}$ for an arbitrary type $F$.
The demand that $F$ be exposed as an arrow type is delayed from the definition site to the call site and discharged by the argument itself,
as a value of type $\TDom{F}$ exists only when $F$ is a subtype of an arrow type.
A function type variable then needs no informative bound, only $<: \TTop$, while its applications still type precisely, reaching the middle ground through a \emph{boundless quantification}.

We prove semantic type soundness and weak normalization by logical relations,
where path selection delays a projection's interpretation until the underlying arrow type is resolved.
The same machinery extends to product projections, which cooperate with domain and range in one system.
All are mechanized in Rocq.
 \end{abstract}

\maketitle

\lstMakeShortInline[keywordstyle=,%
                    flexiblecolumns=false,%
                    language=PolyRT,
                    basewidth={0.56em, 0.52em},%
                    mathescape=true,%
                    basicstyle=\footnotesize\ttfamily]@

\section{Introduction}  \label{domrange:sec:introduction}

Bounded quantification, studied formally in System \Fsub \cite{DBLP:journals/iandc/CardelliMMS94},
is common in modern statically typed languages.
It lets a single definition range over every subtype of a declared bound,
as @<T extends {x: number}>@ over the records that extend @{x: number}@,
and underlies bounded generics in languages such as Scala and TypeScript.
This power, however, carries a fixed cost in the surface syntax.
The quantified type parameters of a generic definition cannot be inferred automatically,
because type inference is already undecidable for System F \cite{DBLP:journals/apal/Wells99},
and for \Fsub even the subtyping relation is undecidable \cite{DBLP:conf/popl/Pierce92}.
The programmer must therefore declare every type parameter (and its non-top bound) explicitly.
In TypeScript, for instance, even a decorator that only logs a call and forwards its function argument must name that function's input and output up front:
\begin{lstlisting}
  function logDecorator<A, B>(fn: (x: A) => B): (x: A) => B {
    return (x: A): B => {
      console.log("calling", fn.name, x);
      return fn(x);
    };
  }
\end{lstlisting}
The parameters @A@ and @B@ are part of the signature, not optional decoration that can be inferred.

The cost of this explicit naming compounds quickly in \emph{higher-order} code, where functions are passed to and returned by other functions, as in wrappers, decorators, and callbacks.
A parameter applied in the body must be given an arrow type\footnote{Throughout, an \emph{arrow type} is a type of the form $\TFun{A}{B}$, and a type \emph{has arrow shape} when it is such an arrow; a quantified variable \emph{has an arrow bound} when its declared bound is an arrow type. We reserve \emph{function type} for any type given to a function value, including a type variable.} $\TFun{A}{B}$, which means naming its input and output as two type parameters, exactly as @logDecorator@ named @A@ and @B@ above.
Every function threaded through a higher-order program therefore carries its own pair of parameters, and the pairs pile up.
Subscribing to an event source with a data handler and an error handler already takes four:
\begin{lstlisting}
  function subscribe<A, B, C, D>(next: (x: A) => B, error: (e: C) => D) { ... }
\end{lstlisting}
The two handlers stay together, as only the source decides at runtime which one fires.
Each further function may add another pair, and every intermediate layer that only forwards a function will repeat its pair.
The function type is thus never named as a whole, but survives only as a scattering of input and output parameters.
What one really wants is the reverse, \ie,
to type the function by one type variable, and to recover its input and output only where they are used.

\paragraph{TypeScript's utility types.}
TypeScript's utility types @Parameters<T>@ and @ReturnType<T>@ match precisely this expectation.
They project the input and output types out of a function type @T@, so one would keep the single parameter @T@ and read its input and output off @T@ where they are needed:
\begin{lstlisting}
  function logDecorator<T extends (_: never) => unknown>(fn: T) {
    return (...arg: Parameters<T>): ReturnType<T> => fn(arg);
    // ${\color{red}\text{Type error}}$: Parameters<T> not assignable to never
    // ${\color{red}\text{Type error}}$: unknown not assignable to ReturnType<T>
  }
\end{lstlisting}
The signature is exactly what we want, the single @T@ in place of the pair @A@ and @B@, but the body does not type-check.
A bound giving @T@ an arrow shape is required for @Parameters<T>@ and @ReturnType<T>@ to be well-formed and for @fn@ to be applied, 
yet even the most permissive sound choice @(_: never) => unknown@ (formally $\TFun{\TBot}{\TTop}$) used above,
is too coarse to justify the call, 
as \Cref{domrange:sec:overview-tension} walks through.
The idiom in actual code therefore retreats to the unsafe bound, which TypeScript accepts:
\begin{lstlisting}
  function logDecorator<T extends (..._: any) => any>(fn: T) { ... }   // accepted
\end{lstlisting}
Its @any@ is an \emph{unsafe top}, and a value of type @any@ passes every static check, so @fn@'s result may be used in ways a sound system rejects.
TypeScript thus buys the clean signature at the price of soundness.
The failure is not a TypeScript accident, and we locate the reason next.

\paragraph{Early exposure costs precision.}
The tension comes from bounded quantification itself.
Removing the logging from @logDecorator@ leaves the minimal form, an $\eta$-expansion of a function typed by a single type variable.
The wrapper should accept every function, and the most permissive sound bound for $F$ is $\TFun{\TBot}{\TTop}$, as in the rejected @logDecorator@ example, giving
$\textit{eta} := \Lambda F <: \TFun{\TBot}{\TTop}.\, \lambda f : F.\, \lambda x : \TBot.\, f(x)$, whose body applies the formal parameter $f$.
To type the application $f(x)$, $f$'s type must be exposed as an arrow type, 
while the bound supplies that exposure at the definition site, so the body types against $\TFun{\TBot}{\TTop}$:
\[
  \inferrule*[Right=\textsc{T-App}]
    { \inferrule*[Right=\textsc{T-Sub}]{ f : F \\ F <: \TFun{\TBot}{\TTop} }{ f : \TFun{\TBot}{\TTop} }
      \\ \quad x : \TBot }
    { f(x) : \TTop }
\]
However, the exposure comes too \emph{early}.
The upcast in the derivation fixes $f$'s type at the declared bound before any instantiation arrives, but the instantiations vary across uses (\eg, $\TFun{\TInt}{\TBool}$ or $\TFun{\TBool}{\TInt}$), so one sound bound must over-approximate them all.
The per-instantiation precision is lost in that upcast, and the wrapper's input and output collapse to $\TBot$ and $\TTop$:
\[
  \textit{eta} \;:\; \TAll[F]{(\TFun{\TBot}{\TTop})}{\TFun{F}{\TFun{\TBot}{\TTop}}}
\]

This lost precision is exactly what @Parameters<T>@ and @ReturnType<T>@ promise to keep, and every way to recover it gives up something else.
Bounding by the unsafe @any@, as @logDecorator@ does, keeps the precision but leaves the uses of @fn@'s result unchecked.
Naming the input and output as explicit parameters $A$ and $B$ again is sound and precise, $\textit{eta} \;:\; \TAll[A\,B]{\TTop}{\TFun{(\TFun{A}{B})}{\TFun{A}{B}}}$,
but returns to the heavyweight decomposition the utility types were meant to replace.
For a function kept under a single type variable, then, precision and soundness cannot both hold.

\paragraph{Type destructors.}
This tension is not new, and for covariant type constructors it already has a solution whose guiding idea we share.
Facing the same conflict, \citet{DBLP:journals/iandc/HofmannP02} introduced \emph{type destructors}, splitting the two roles a bound plays: the \emph{shape} it certifies and the \emph{components} it names.
A product bound $X <: \TPair{\TTop}{\TTop}$ is only a certificate that $X$ is some product,
while destructors $X.1$ and $X.2$ recover its components per instantiation, internalizing $X \equiv_\eta X.1 \times X.2$.
Their \textit{mix} function, rebuilding a pair from the components of two others, then types precisely at $\TAll[X]{(\TPair{\TTop}{\TTop})}{\TFun{X}{\TFun{X}{X}}}$, never upcasting $X$ to its bound.
They deliberately stopped short of arrow types, however, as the contravariant domain ``raises difficult metatheoretic problems'' and they ``could not think of any useful applications''.
Those applications, such as @logDecorator@, have since arrived, and they land exactly on the arrow-type case left open.

\paragraph{Our work.}
We address the arrow-type case in the same spirit but with a different mechanism.
The contravariant domain permits what no covariant constructor can, dropping the shape certificate altogether.
A function type variable can thus be quantified at the trivial bound $\TTop$ 
owing no \emph{arrow obligation}\footnote{We call \emph{arrow obligation} the demand that a type be derivably an arrow: that it be an arrow type itself or, as for a variable with an arrow bound, a subtype of a known arrow type.}, 
yet the function parameter it types can still be applied, and applied precisely at the definition site.
This is the \emph{boundless quantification} for function type variables.%

We present \domranlang, a conservative extension of System \Fsub with \emph{domain} and \emph{range} projection types, $\TDom{T}$ and $\TRange{T}$, and a redesigned application rule \rulename{domrange:tappdomrange}.
The rule types a call $f(x)$ by the projection on $f$'s type $F$:
the argument must have type $\TDom{F}$, and the result is typed at $\TRange{F}$.
Even when $F$ is an arbitrary type with no arrow shape assumed, this application is typed soundly and precisely, since $f$ is never upcast to an arrow bound.
Two underlying ideas combine.
\begin{itemize}
  \item \textbf{Projection.} $\TDom{F}$ and $\TRange{F}$ name the input and output of a function type $F$ as \emph{first-class} types, 
  eliminated at arrow types by $\TDom{\TFun{A}{B}} \equiv A$ and $\TRange{\TFun{A}{B}} \equiv B$, with contravariant congruence for the domain and covariant one for the range. 
  Unlike type destructors \cite{DBLP:journals/iandc/HofmannP02} which are well-kind only under a matching shape kind, the projections are well-formed over every type. \Cref{domrange:sec:casestudy-destructors} develops the comparison.
  \item \textbf{Delay.} The proposed rule types $f(x)$ through the projections alone, without ever requiring exposure that $F$ is an arrow type.
  The arrow obligation on $F$ is instead \emph{delayed} to the call site, where $F$ is instantiated with a concrete arrow type, $\TDom{F}$ and $\TRange{F}$ reduce to its actual input and output, so that standard application typing is recovered exactly.
  The definition site thus never owes the obligation, and the bound can stay $\TTop$.
\end{itemize}

{\footnotesize \setlength{\labelminsep}{1ex}
\setlength{\afterruleskip}{-1ex} \typicallabel{this}
\noindent
\begin{tabular}{@{}p{.43\linewidth}@{\hspace{.04\linewidth}}p{.53\linewidth}@{}}
\infrule[Standard]{
  f: A \to B \andalso
  x: A
}{
  f(x) : B
}
&
\infrule[Proposed]{
  f: F \andalso
  x: \TDom{F}
}{
  f(x): \TRange{F}
}
\\[3em]
\end{tabular}
}

The \textsc{(Proposed)} rule on the right is our \rulename{domrange:tappdomrange} with the typing context omitted, made formal in \Cref{domrange:sec:formal}.
The \textsc{(Standard)} rule is recovered from it by projection elimination whenever the function's type is an arrow type, so every System \Fsub program stays typeable and the extension is non-invasive.
The $\eta$-expansion now attains its cleanest signature, with $F$ bounded only by $\TTop$:
\[
  \textit{eta} \;:=\; \Lambda F <: \TTop.\, \lambda f : F.\, \lambda x : \TDom{F}.\, f(x)
  \;:\; \TAll[F]{\TTop}{\TFun{F}{\TFun{\TDom{F}}{\TRange{F}}}}
\]
The rule reaches beyond wrappers, and \Cref{domrange:sec:delay-eta} types a polymorphic @map@ over Church-encoded lists with the same single parameter.
We formalize \domranlang in Rocq and prove, by logical relations, semantic type soundness and weak normalization.
The proof adopts a \emph{path-selection} technique that defers a projection's interpretation until the underlying arrow type is resolved, 
which is the semantic counterpart of the syntactic delay.

\paragraph{Scope.}
\domranlang is a lightweight mechanism for declaring and applying higher-order functions in languages with subtyping, complementing parametric polymorphism rather than replacing it.
Boundless quantification is specific to function type variables, resting on the contravariance of the domain, while covariant constructors have no such analogue and still require a bound certifying their shape.
The underlying technique does generalize: we extend it to product projections, which cooperate with the domain and range projections and admit a direct comparison with the destructors of \citet{DBLP:journals/iandc/HofmannP02} (\Cref{domrange:sec:casestudy-destructors}), and to higher-kinded type operators (appendix Section A).
Our contribution is the declarative type system and its metatheory, which we expect to carry to practical languages, realizing idioms like TypeScript's utility types but soundly.
Implementation concerns, such as algorithmic type checking and inference, are beyond the scope of this paper.

\paragraph{Contributions.}
The rest of this paper is organized around our main contributions:
\begin{itemize}
  \item We revisit the practical use of projection types,
    show what their reliance on the unsafe @any@ over abstract function types reveals,
    and introduce our two ingredients yielding boundless quantification together: first-class projections and application by delay. (\Cref{domrange:sec:overview}).
  \item We explain why delaying the arrow obligation from the definition site to the call site is sound, show that the delayed rule reaches beyond the $\eta$-expansion to programs over Church-encoded data, and argue that the delay has no covariant analogue (\Cref{domrange:sec:delay}).
  \item We present the formal definition of \domranlang
    (\Cref{domrange:sec:formal}).
  \item We develop the logical relation of \domranlang
    with path selection, which delays and resolves type projections,
    and prove semantic type soundness and weak normalization (\Cref{domrange:sec:LR}).
  \item We extend \domranlang with primitive pairs and product projection types (\Cref{domrange:sec:casestudy}),
    showing that the approach generalizes beyond arrow types, that the two projection families cooperate, and that the extension stays non-invasive.
    Products also give common ground for a direct comparison with the type destructors of \citet{DBLP:journals/iandc/HofmannP02} (\Cref{domrange:sec:casestudy-destructors}).
\end{itemize}
\Cref{domrange:sec:discussion} discusses further extensions and practical impact,
\Cref{domrange:sec:related-work} related work, and
\Cref{domrange:sec:conclusion} concludes.
Extensions with higher-kinded subtyping are in appendix Section A.

\section{Informal Overview} \label{domrange:sec:overview}

The introduction traced the tension to bounded quantification, where applying a function-typed parameter requires its type to be exposed as an arrow type at the definition site, and any sound bound there costs precision.
This section develops the tension and our response at greater length.
We first show where the TypeScript utility types @Parameters<T>@ and @ReturnType<T>@ succeed on concrete function types (\Cref{domrange:sec:overview-tsrecap}), 
and where they fail over abstract ones, forcing the unsafe @any@ (\Cref{domrange:sec:overview-tension}).
We then present our two ingredients, 
\emph{domain} and \emph{range} types as first-class projections (\Cref{domrange:sec:overview-projection}),
and an application rule that types a call without exposing an arrow type (\Cref{domrange:sec:overview-domrange-delay}).

\subsection{Usage of Projection Types} \label{domrange:sec:overview-tsrecap}

We illustrate throughout with TypeScript, 
which provides utility types @Parameters<T>@ and @ReturnType<T>@ projecting the input and output types out of a function type @T@.
The utility types let the type system follow the code rather than forcing programmers to transcribe and maintain parallel type signatures,
and so they pervade real codebases such as VSCode
\cite{MicrosoftVscode2026}, OpenCode \cite{AnomalycoOpencode2026}, and RocketChat
\cite{RocketChatRocketChat2026}.
Appendix Section B quantifies this prevalence over the 100 most-starred TypeScript repositories.

\paragraph{Reading types off the implementation.}
When a function has no explicit return annotation, a hand-written type alias creates a second source of truth that can fall out of sync with the implementation, 
while @ReturnType@ reads the type directly from it:
\begin{lstlisting}
  function createMyComplexObject() { return { hello: "world" }; }
  type MyComplexObject = ReturnType<typeof createMyComplexObject>;  // { hello: string }
\end{lstlisting}
The same projection abstracts over types the program does not control.
For instance, @setTimeout@ returns @number@ in browsers but @NodeJS.Timeout@ in Node.js, and @ReturnType<typeof setTimeout>@ captures whichever type the platform provides without naming it.

\paragraph{Composing with other type-level utilities.}
The projections also compose with mapped types, conditional types, and other
utilities just like ordinary types. For instance, a wrapper accepting optional
fields can derive its signature directly from the wrapped function via @Partial@:
\begin{lstlisting}
  function serveWithCfg(cfg: { host: string; port: number }) { ... }
  // cfg in serveWithPartialCfg: { host?: string; port?: number }
  function serveWithPartialCfg(cfg: Partial<Parameters<typeof serveWithCfg>[0]>) { ... }
\end{lstlisting}

\paragraph{Forwarding a function's types.}
Dually, @Parameters@ recovers the argument types of an existing function, so a
forwarding wrapper need not restate them:
\begin{lstlisting}
  function createUser(name: string, age: number) { ... }
  function logCreateUser(...args: Parameters<typeof createUser>) {  // args: [string, number]
    console.log("createUser", args);
  }
\end{lstlisting}

In all of these cases, the projection is a form of \emph{type-level computation} over a concrete function type (or equivalently a @typeof@), and everything works smoothly.
However, a reusable wrapper must forward an \emph{unknown} one, an abstract @T@, and that is where the trouble begins.

\subsection{The Unsafe \texttt{any}} \label{domrange:sec:overview-tension}
The dominant higher-order use of these projections is $\eta$-expansion, a wrapper that forwards another function while preserving its signature.
It is the skeleton of many modern programming patterns, including decorators, callbacks, and React-style programming.
The @logDecorator@ in \Cref{domrange:sec:introduction} is a typical instance, accepted only under the unsafe bound @T extends (..._: any) => any@.
We recall why no sound bound works, and what that reveals about the mechanism we need.

\paragraph{A sound bound is too coarse.}
To apply @fn@ in the body, @T@ must be constrained by an arrow bound, \ie, an arrow type that holds for all possible instantiations of @T@.
Every arrow type is a subtype of @(_: never) => unknown@ (formally $\TFun{\TBot}{\TTop}$), so it is the single bound that covers them all, 
which is exactly the bound in the rejected decorator in \Cref{domrange:sec:introduction}.
By arrow subtyping, contravariant in the domain and covariant in the range, it is also the least informative arrow, 
as its domain @never@ is smaller than any instantiation's and its range @unknown@ is larger.
Inside the body this bound is all the type system knows about @fn@, so the call @fn(arg)@ fails at both ends:
\begin{lstlisting}
  fn(arg)   // input:  arg : Parameters<T>, but fn's parameter is never
            // output: fn(arg) : unknown, but the signature promises ReturnType<T>
\end{lstlisting}
The input fails because @fn@'s parameter has been shrunk to @never@, the bottom type which no value inhabits, so no @arg@ can be passed.
The output fails because @fn@'s result has been widened to @unknown@, the top type, which has discarded the precise @ReturnType<T>@ the signature promises.
Both failures are from the same over-approximation, since a single bound must fit every instantiation and so cannot preserve the precise input and output.

\paragraph{The cost of \texttt{any}.}
The only bound that lets the body through is the unsafe @any@, which silences the static checker for the whole body.
Any operation on @fn@'s result is then accepted, including calls that crash at runtime:
\begin{lstlisting}
  function logDecorator<T extends (..._: any) => any>(fn: T) {
    return (...args: Parameters<T>): ReturnType<T> => {
      const res = fn(...args);
      console.log(res.toUpperCase());   // type-checks, but may throw at runtime
      return res;
    };
  }
  function add1(a: number) { return a + 1; }
  logDecorator(add1)(42);  // ${\color{red}\text{Executed JavaScript Failed}}$: res.toUpperCase is not a function
\end{lstlisting}
TypeScript accepts this trade-off, as soundness is not one of its design promises.
A language that instead wants sound static reasoning cannot take that option, 
since @any@ is unsound and, no sound bound types the body either as we just illustrated.
Such a language is thus forced to give up either the clean lightweight signature or its soundness.

\paragraph{From \texttt{any} to projections.}
However, if we rethink about the tension, 
the precise types we target are not missing, only unavailable too early at the definition site. 
At each concrete call the actual arrow type is in hand, and TypeScript resolves the real @Parameters@ and @ReturnType@ against it.
The @any@ bound merely stands in for that arrow type, unknown at the definition site but fixed once @T@ is instantiated.
The idiom is thus asking for a sound static mechanism that names a function's input and output without owing the arrow obligation, and types an application while its function type stays abstract, deferring that obligation to the call site.
These are exactly the two ingredients, namely first-class projection types (\Cref{domrange:sec:overview-projection}) and an application rule that delays the arrow obligation to the call site (\Cref{domrange:sec:overview-domrange-delay}).

\subsection{Domain and Range Types: First-Class Projections} \label{domrange:sec:overview-projection}

Our first ingredient is a pair of first-class projection types, \emph{domain} and \emph{range}.
For any type $F$, whether or not it describes a function, $\TDom{F}$ types the arguments a value of type $F$ may accept and $\TRange{F}$ the results it may return.
When $F$ describes no function, both are vacuous.

\paragraph{Elimination on a known arrow type.}
When $F$ is known to be an arrow type, the projections eliminate to its corresponding components, an equivalence encoded by bidirectional subtyping:
\[
  \TDom{\TFun{A}{B}} \;\equiv\; A
  \qquad\qquad
  \TRange{\TFun{A}{B}} \;\equiv\; B.
\]
When $F$ is not, or cannot upcast to an arrow type, $\TDom{F}$ and $\TRange{F}$ cannot eliminate and have no inhabitants.

\paragraph{Congruence and variance.}
Like type operators, the projections carry congruence subtyping rules.
Their variance mirrors that of arrow types, contravariant in the domain and covariant in the range. From $A <: B$ we have:
\[
  \TDom{B} \;<:\; \TDom{A}
  \qquad\qquad
  \TRange{A} \;<:\; \TRange{B}.
\]
For instance, $\TFun{\Type{Bool}}{\TInt} <: \TTop$ gives $\TDom{\TTop} <: \TDom{\TFun{\Type{Bool}}{\TInt}} \equiv \TBool$ by congruence and elimination.
This contravariant direction is the hard case flagged by \citet{DBLP:journals/iandc/HofmannP02}, especially when you combine contravariant and covariant projections.
We handle it declaratively here, and in the soundness proof by tracking the polarity of accumulated projections (\Cref{domrange:sec:LR}).

\paragraph{First-class use.}
Being first-class, $\TDom{F}$ and $\TRange{F}$ appear wherever a type can, in signatures and in data structures.
A program can thus name a function's input or output type on its own, while keeping the function type abstract, 
exactly as the ergonomics of TypeScript's utility types but soundly.
For example, a record type can give one field type @F@ and another its domain @Dom<F>@:
\begin{lstlisting}
  type CallbackState<F> = {
    callback: F;                     // the function type, kept abstract
    lastArgument: Dom<F> | null;     // an independent projection of its domain
  }
\end{lstlisting}
Such a record type is reachable by Church encoding, as the list type behind @map@ shown in \Cref{domrange:sec:delay-eta}.

\paragraph{Projection on every type.}
These projections are \emph{primitive}, formed with no proof that $F$ has arrow shape, and that is what makes them the right ingredient.
Were a projection to require such a proof, that proof would be the arrow bound $F <: \TFun{\TBot}{\TTop}$ we set out to remove, back at the definition site.
Other systems do demand such a proof: TypeScript's @Parameters<T>@ is well-formed only when the bound of @T@ has arrow shape, rejecting @Parameters<number>@ for example; 
and the type destructors \cite{DBLP:journals/iandc/HofmannP02} rest on a shape witness supplied by kinding, 
as the comparison we develop in \Cref{domrange:sec:casestudy-destructors}.
Our projections instead stay witness-free, so $\TDom{\TInt}$ is a well-formed but just uninhabited type, and the next ingredient can type an application over an arbitrary $F$.

\subsection{Domain and Range Types: Application by Delay} \label{domrange:sec:overview-domrange-delay}

The second ingredient is the application rule built on these projections.
Standard application rule must first expose the callee's type as an arrow type to read off its input and output,
and that exposure is what costs precision as discussed before.
Instead, we type an application directly over the projections, letting the function position have an arbitrary type $F$:
\[
  \begin{tabular}{b{.00\linewidth}@{}b{.5\linewidth}}
  &
  \infrule[]{
    f : F \qquad x : \TDom{F}
  }{
    f(x) : \TRange{F}
  }
  \\ [0.5em]
  \end{tabular}
\]
The rule rests on a simple principle: 
\emph{applying a function to a value of its domain is always valid and produces a value of its range}.
It uses the single type $F$ throughout without ever decomposing it,
so at the definition site a function type variable needs only the trivial bound $\TTop$,
which is the \emph{boundless quantification} promised in \Cref{domrange:sec:introduction}.
The rule strictly generalizes standard application, coinciding with it when $F$ is a known arrow type (\Cref{domrange:sec:delay-whysound}).

\paragraph{The polymorphic $\eta$-expansion.}
The polymorphic $\eta$-expansion is the minimal example allowed additionally by boundless quantification, typed with $F$ bounded only by $\TTop$:
\[
  \textit{eta} \;:=\; \Lambda F <: \TTop.\, \lambda f : F.\, \lambda x : \TDom{F}.\, f(x)
  \;:\; \TAll[F]{\TTop}{\TFun{F}{\TFun{\TDom{F}}{\TRange{F}}}}
\]
or, in TypeScript syntax:
\begin{lstlisting}
  function eta<F>(f : F): (x: Dom<F>) => Range<F> { return x => f(x); }
  const etaFun = eta<(_: number) => string>(i2c);   // i2c : (_: number) => string
    // : (x: Dom<(_: number) => string>) => Range<(_: number) => string>
    // <: (x: number) => string
  const result = etaFun(7);                // : string
\end{lstlisting}
The wrapper is nonetheless fully precise at each instantiation.
Instantiating @F@ to @(_: number) => string@ resolves @Dom<F>@ to @number@ and @Range<F>@ to @string@ without resorting to the unsafe @any@.\footnote{\lstinline|Dom| and \lstinline|Range| are not TypeScript utility types but our projections rendered in TypeScript syntax, in the roles of \lstinline|Parameters| and \lstinline|ReturnType|, which rejects the signature because they require type \lstinline|F| to satisfy the constraint \lstinline|(...args: any) => any|. The rendering is for illustration only.}

\paragraph{Delaying shape obligation.}
The arrow obligation on $F$ is not dropped, only \emph{delayed} from the definition site to the call site, where the argument itself discharges it.
A value of type $\TDom{F}$ can exist only when $F$ is a subtype of an arrow type, so the argument doubles as a \emph{witness} of the arrow type the call needs.
\Cref{domrange:sec:delay-whysound} makes this precise, including the degenerate cases where no genuine call can ever be assembled.

\paragraph{Conservative over standard application.}
Conversely, when $F$ is already a known arrow type $\TFun{A}{B}$, the projections eliminate and the rule coincides with the standard one:
\[
  \begin{tabular}{b{.00\linewidth}@{}b{.5\linewidth}}
  &
  \infrule[]{
    f : \TFun{A}{B} \qquad x : \TDom{\TFun{A}{B}} \equiv A
  }{
    f(x) : \TRange{\TFun{A}{B}} \equiv B
  }
  \\ [0.5em]
  \end{tabular}
\]
The new rule therefore only adds the ability to apply a function of abstract type.
Every \Fsub-typeable program stays typeable, and \domranlang is a conservative extension.

\paragraph{Composing with other features.}
Domain and range types are non-invasive and compose with the rest of \Fsub.
They coexist with other projections. 
We add projections for product types in \Cref{domrange:sec:casestudy}, cooperating with domain and range in one system, so a program can choose which structure to expose.
The system also scales well to higher-kinded type operators and higher-order subtyping following \citet{DBLP:journals/tcs/PierceS97}, developed in appendix Section A.
\section{Boundless Quantification}  \label{domrange:sec:delay}

\Cref{domrange:sec:overview-domrange-delay} proposed an application rule that types $f(x)$ from $f : F$ and $x : \TDom{F}$ for an \emph{arbitrary} type $F$, deferring to the call site the obligation that $F$ is an arrow.
That deferral is the source of the rule's expressiveness, and also of a risk that will be proven only apparent.
It brings expressiveness because the function position may be any abstract $F$, so a single boundlessly quantified type variable can stand for a function throughout a higher-order program.
It looks risky because nothing at the definition site forces $F$ to be a function, and if $F$ is instantiated to a non-function such as $\TInt$, the body $f(x)$ appears to apply a non-function with no rule rejecting it.

We examine both sides.
\Cref{domrange:sec:delay-eta} elaborates the rule to work on abstract functions, from the $\eta$-expansion to a polymorphic @map@.
\Cref{domrange:sec:delay-whysound} makes the worry concrete and shows the rule is nonetheless sound.
\Cref{domrange:sec:delay-covariant} explains why such deferral is available only for contravariant projections and cannot be replayed for covariant type constructors.
The semi-formal argument here is made rigorous by the mechanized logical-relations proof of \Cref{domrange:sec:LR}.

\subsection{Applying a Function of Abstract Type} \label{domrange:sec:delay-eta}

\Cref{domrange:sec:overview-domrange-delay} motivated the rule with the polymorphic $\eta$-expansion $\textit{eta}$, but the rule is not tailored to wrappers of this shape.
In the $\lambda$-calculus, application is the elimination form for functions, and Church encodings show how much abstraction and application alone can already express.
Algebraic data such as booleans, pairs, and lists can be encoded as the functions that fold them, and are then consumed by application.
A more general application rule therefore carries over at once to programs over such encoded data types.

We demonstrate with @map@ over Church lists, where a polymorphic list is its own fold:
\[
  \Type{List}\,A \;:=\; \TAll[R]{\TTop}{\TFun{(\TFun{A}{\TFun{R}{R}})}{\TFun{R}{R}}}.
\]
One type parameter then suffices for @map@, where the standard signature needs two:
{\small
\[
\begin{array}{@{}r@{\;\;}l@{}}
  \textit{map} & : \;\; \TAll[F]{\TTop}{\TFun{F}{\TFun{\Type{List}\,\TDom{F}}{\Type{List}\,\TRange{F}}}} \\[0.4ex]
               & := \; \Lambda F <: \TTop.\, \lambda f : F.\, \lambda \ell : \Type{List}\,\TDom{F}.\;
                       \Lambda R <: \TTop.\, \lambda c : \TFun{\TRange{F}}{\TFun{R}{R}}.\, \lambda n : R. \\[0.2ex]
               & \phantom{:=\;} \qquad \ell\,[R]\big(\lambda x : \TDom{F}.\, \lambda a : R.\; c(f(x))(a)\big)(n)
\end{array}
\]
}%
The body is the textbook fold, so we only point out what is new.
In the entire body, the proposed rule types exactly one application, $f(x)$, from $x : \TDom{F}$ to $\TRange{F}$, even though it sits under the fold's own quantifier $R$.
Every other application stays standard, since $\ell\,[R]$ and $c$ have concrete arrow types.
Meanwhile, the projections is \emph{nested} inside $\Type{List}$ as its element types needing no special provision, since they are first-class types.
For callers, nothing changes either.
Given $g : \TFun{A}{B}$ and an ordinary $\Type{List}\,A$ instantiating $F := \TFun{A}{B}$ eliminates the projections under the constructor, so $\Type{List}\,\TDom{\TFun{A}{B}} \equiv \Type{List}\,A$, and the call accepts the list and returns a $\Type{List}\,B$, just as with the classic @map<A, B>@.
Both versions are mechanized.

\subsection{Soundness: Witness or Vacuous} \label{domrange:sec:delay-whysound}

This expressiveness appears to threaten soundness.
Since the bound is only $\TTop$, nothing prevents a non-function instantiation, and we can even form a partial application that looks alarming:
\[
  \textit{etaErr} \;:=\; \textit{eta}\,[\TInt]\,(42) \;:\; \TFun{\TDom{\TInt}}{\TRange{\TInt}}
\]
Here $\textit{etaErr}$ instantiates $\textit{eta}$ at the non-function type $\TInt$ and supplies $42$ as its function argument, and if it could ever be called, it would apply a number as if it were a function.
The soundness question is thus whether a well-typed call $f(x)$ can ever apply a non-function at runtime.
We answer it by asking when the call can actually happen, that is, when a function value for $f$ and an argument value for $x$ can both be supplied.
The whole argument turns on one fact about the argument, called the \emph{witness property}, which we read both statically and operationally:
\begin{center}
  \emph{a value of type $\TDom{F}$ can exist only when $F$ is a subtype of an arrow type.}
\end{center}
In other words, an argument does more than fill a parameter.
Its mere existence certifies that $F$ has arrow structure, so the argument witnesses exactly what the definition site delays to check.
Every instance of the rule then falls into one of two cases, and both are sound:
(1) if the call can eventually happen, it is a standard application in disguise; and
(2) if it can never happen, it is dead code and vacuously safe.
We assume call-by-value reduction throughout. The choice is for exposition only, and our mechanization fixes a big-step call-by-value semantics without losing generality.

\paragraph{Reasoning from static typing.}
Our proposed rule asks the argument to have type $\TDom{F}$.
Since $\TDom{F}$ represents the domain of $F$, this requirement says exactly that the argument lies in the domain of the function, analogous to $x : A$ against a known arrow $\TFun{A}{B}$.
Suppose the parameter $x$ is supplied with a closed value $v$ of type $T_v$, so that $T_v <: \TDom{F}$.
Only three subtyping rules can end a derivation at $\TDom{F}$:
\begin{enumerate}
  \item \emph{introduction}: $A <: \TDom{\TFun{A}{B}}$, which requires the projected type to be literally an arrow;
  \item \emph{congruence}: $\TDom{S} <: \TDom{F}$ from $F <: S$, which passes the question up to a supertype $S$ of $F$; and
  \item \emph{bottom}: $\TBot <: \TDom{F}$, which no value can use, since no value has type $\TBot$.
\end{enumerate}
A derivation of $T_v <: \TDom{F}$ must therefore bottom out at an introduction on some arrow, and the congruence steps along the way require $F <: \TFun{A}{B}$.
Hence if $\TDom{F}$ has any inhabitant, $F$ is a subtype of an arrow, which describes the witness property.

The witness property makes $F$ a subtype of an arrow, not yet an arrow itself, and the remaining gap is closed from the function value at the call site.
The degenerate $F \equiv \TBot$ is a subtype of every arrow, and its domain projection is even inhabited by every value, \ie, $\TDom{\TFun{A}{B}} <: \TDom{\TBot}$ for every arrow, so $\TDom{\TBot} \equiv \TTop$.
However, a call still never happens at $F \equiv \TBot$, because the call also needs a function value for $f$, and no value has type $\TBot$.
Whenever a call does happen, then, the argument value makes $F$ a subtype of some arrow $\TFun{A}{B}$, and the function value $f$, having type $F <: \TFun{A}{B}$, must itself be a function.

\paragraph{Reasoning from reduction.}
The witness property also has a dynamic reading.
The wrapper $\textit{eta}$ consumes its arguments in order. 
The original term is already a value as a type abstraction; once $F$ is instantiated and $f$ is supplied, 
the closure $\lambda x : \TDom{F}.\, f(x)$ is again a value; 
and the body $f(x)$ is only reduced when an argument \emph{value} of type $\TDom{F}$ arrives.
At that point $F$ is a subtype of an arrow type $\TFun{T_1}{T_2}$, the supplied $f$ is itself a function, and $f(x)$ reduces at exactly the right type.
The application is thus guarded not by the arrow shape exposure at the definition site, but by the impossibility of ever assembling a stuck application.

\paragraph{The two cases.}
Every instance of the rule therefore falls into one of two cases, both sound.
\begin{enumerate}
  \item \emph{The call can happen}: a function value and an argument value are both supplied. The argument makes $F$ a subtype of an arrow, and an inhabited subtype of an arrow is equivalent to one, say $\TFun{A}{B}$. The call is then $f : \TFun{A}{B}$ applied to $x : \TDom{F} \equiv A$ with result $\TRange{F} \equiv B$, recovering standard application exactly.
  \item \emph{The call can never happen}: one of the two values can never be supplied. When $F$ has no arrow supertype (\eg, $\TInt$ or $\TTop$), the domain $\TDom{F}$ is uninhabited and no argument ever arrives, so a partial application such as $\textit{etaErr}$ is a well-typed value whose body is unreachable, namely a function with an empty domain. When $F \equiv \TBot$, no satisfing function value exists. Either way the application is dead code, and no reduction ever applies a non-function.
\end{enumerate}
The degenerate $\TBot$ case is no accident.
In the logical relation of \Cref{domrange:sec:LR}, a projection reached along a \emph{negative} path over $\TBot$ is interpreted as the set of all values, and safety is carried by the emptiness of $\TBot$ itself.

In short, a function abstraction describes a future application, not one that has already happened.
The witness that $F$ has arrow structure is never demanded up front. 
It arrives with the argument, at the call site, exactly when reduction is about to occur.
TypeScript follows the same call-site strategy, ``inspect the function once you have it, then read off its projections'', but at the price of the unsafe @any@.
Our rule captures the strategy within a sound static discipline, with no arrow bound and no auxiliary mechanism to carry structural information into the body.

\subsection{Delay Is Unique to Contravariance} \label{domrange:sec:delay-covariant}

The argument above turned on the witness arriving from \emph{outside} the abstraction, which is special to functions.
It is natural to ask whether the same deferral applies to covariant projections, such as the components $\TFst{P}$ and $\TSnd{P}$ of a product, which would extend the boundless style beyond functions.
However, it does not carry over, and the reason pinpoints unique properties of arrows.

A function is used by being \emph{called}.
The evidence that resolves its shape is the argument itself, which arrives from outside the abstraction at the call site, exactly where the application reduces.
However, a product is used differently, which is projected.
A projection $p.1$ happens \emph{inside} the body that manipulates $p$, where no future use can supply the evidence.
The obligation that $p$ is a product must therefore travel into the body, carried by a bound $X <: \TPair{\TTop}{\TTop}$, which is precisely the shape kinding of \citet{DBLP:journals/iandc/HofmannP02}.

The two mechanisms are thus complementary, each matching the variance of the type it projects.
A covariant projection carries its shape obligation \emph{inward} through a bound, as kinding does. 
A contravariant projection can instead defer it \emph{outward} to the call site, because the value that triggers reduction is also the one that witnesses the shape.
Exploiting this asymmetry is what lets \domranlang drop the arrow bound entirely for function types.
\Cref{domrange:sec:casestudy} adds product projections alongside domain and range, showing that two kinds of projections cooperating in one system.
\section{Formal Presentation} \label{domrange:sec:formal}

We present the \domranlang-calculus, an extension of the \Fsub-calculus with domain and range types.
The \Fsub fragment is standard, so we keep its description brief and focus on the projections and the new application rule.

\subsection{Syntax} \label{domrange:sec:formal-syntax}

\begin{figure}[t]\small
\begin{mdframed}
\begin{minipage}[t]{1.0\textwidth}
\judgement{Syntax}{\BOX{\domranlang}}\small %
  \[\begin{array}{l@{\qquad}l@{\qquad}l@{\qquad}l}
    x,y,z   & \in & \TVar        & \text{Variables}               \\
    f,g,h   & \in & \TVar       & \text{Function Variables}      \\
    X       & \in & \TVar       & \text{Type Variables} \\
    t       & ::= & c \mid x \mid \lambda x.t \mid t~t \mid \TLam{X}{t} \mid \TApp{t}{T}    &  \text{Terms}           \\ [2ex]
    S,T,U & ::= & B \mid \TFun{T}{T} \mid \TAll[X]{T}{T} \mid \TBot  & \\
            &     & \mid \TTop \mid  \TDom{T} \mid \TRange{T}    & \text{Types}                   \\[2ex]

    \Gamma  & ::= & \varnothing\mid \ctxext{\Gamma}{x : T} \mid \ctxext{\Gamma}{X <: T}             & \text{Typing Environment}     \\[2ex]
    \end{array}\]

\end{minipage} %

\caption{The syntax of \domranlang calculus.} \label{domrange:fig:formal-syntax}
\end{mdframed}
\vspace{-2ex}
\end{figure} 
\Cref{domrange:fig:formal-syntax} defines the syntax of \domranlang.
Terms are standard \Fsub terms, including value constant $c$, term variable $x$, $\lambda$-abstraction, application, type $\Lambda$ abstraction, and type instantiation.
Types are those of \Fsub as well, including base type $B$, arrow type, bounded quantification, bottom type $\TBot$, and top type $\TTop$; extended with the domain and range projections $\TDom{T}$ and $\TRange{T}$, formed over arbitrary types.
The $\Gamma$ context mixes term and type variable bindings, following \citet{tapl_book}; our mechanization instead keeps type variables and their bounds in a separate context, which is an equivalent presentation and better fits the higher-kinded system (appendix Section A).

\subsection{Static Subtyping} \label{domrange:sec:formal-subtyping}

\begin{figure}[t]
\setlength{\afterruleskip}{\bigskipamount}
\small
\begin{mdframed}
\judgement{Subtyping}{\BOX{\strut \stp{\G}{T}{T} }}\\
\typicallabel{strans}
\begin{tabular}{b{.48\linewidth}@{}b{.5\linewidth}}
  \infrule[\ruledef{domrange:stop}{s-top}]{
  }{
    \stp{\G}{T}{\TTop}
  }
  &
  \infrule[\ruledef{domrange:sbot}{s-bot}]{
  }{
    \stp{\G}{\TBot}{T}
  }
  \\ [0.5em]
\end{tabular}
\begin{tabular}{b{.29\linewidth}@{}b{.29\linewidth}@{}b{.4\linewidth}}
  \infrule[\ruledef{domrange:srefl}{s-refl}]{
  }{
    \stp{\G}{T}{T}
  }
  &
  \infrule[\ruledef{domrange:svar}{s-var}]{
    X <: T \in \G
  }{
    \stp{\G}{X}{T}
  }  
  &
  \infrule[\ruledef{domrange:strans}{s-trans}]{
    \stp{\G}{T_1}{T_2} \quad 
    \stp{\G}{T_2}{T_3}
  }{
    \stp{\G}{T_1}{T_3}
  }
\\ [0.5em]
\end{tabular}
\begin{tabular}{b{.48\linewidth}@{}b{.5\linewidth}}
  \infrule[\ruledef{domrange:sfun}{s-fun}]{
    \stp{\G}{T_1}{S_1} \quad
    \stp{\G}{S_2}{T_2}
  }{
    \stp{\G}{\TFun{S_1}{S_2}}{\TFun{T_1}{T_2}}
  }
  &
  \infrule[\ruledef{domrange:sall}{s-all}]{
    \stp{\ctxext{\G}{X <: U_2}}{S}{T} \quad
    \stp{\G}{U_2}{U_1}
  }{
    \stp{\G}{\TAll{U_1}{S}}{\TAll{U_2}{T}}
  } 
  \\ [0.5em]
\end{tabular}
\begin{tabular}{b{.48\linewidth}@{}b{.5\linewidth}}
  \infrule[\ruledef{domrange:sdomintro}{s-dom-intro}]{
  }{
    \stp{\G}{T_1}{\TDom{\TFun{T_1}{T_2}}}
  }
  &
  \infrule[\ruledef{domrange:srangeelim}{s-range-elim}]{
  }{
    \stp{\G}{\TRange{\TFun{T_1}{T_2}}}{T_2}
  }
  \\ [0.5em]
  \infrule[\ruledef{domrange:sdomelim}{s-dom-elim}]{
  }{
    \stp{\G}{\TDom{\TFun{T_1}{T_2}}}{T_1}
  }
  &
  \infrule[\ruledef{domrange:srangeintro}{s-range-intro}]{
  }{
    \stp{\G}{T_2}{\TRange{\TFun{T_1}{T_2}}}
  }
  \\ [0.5em]
  \infrule[\ruledef{domrange:sdom}{s-dom-congr}]{
    \stp{\G}{T}{S}
  }{
    \stp{\G}{\TDom{S}}{\TDom{T}}
  }
  &
  \infrule[\ruledef{domrange:srange}{s-range-congr}]{
    \stp{\G}{S}{T}
  }{
    \stp{\G}{\TRange{S}}{\TRange{T}}
  }
\end{tabular} \\[1.5em]
\caption{Subtyping rules of \domranlang calculus.} \label{domrange:fig:formal-subtyping}
\end{mdframed}

\vspace{-2ex}
\end{figure} 
\Cref{domrange:fig:formal-subtyping} presents the declarative subtyping rules, with explicit reflexivity and transitivity rules.
The \Fsub fragment is the full variant, whose rule \rulename{domrange:sall} compares bounds contravariantly.
The remaining rules govern the two projections.

For domain types, \rulename{domrange:sdomintro} is the introduction rule: any type $T_1$ upcasts to $\TDom{\TFun{T_1}{T_2}}$, since a function of type $\TFun{T_1}{T_2}$ accepts every argument of type $T_1$ and its domain is therefore at least $T_1$.
Paired with the elimination rule \rulename{domrange:sdomelim}, the projection is equivalent to $T_1$ on a precise arrow.
The congruence rule \rulename{domrange:sdom} is contravariant, mirroring the variance of arrow types in \rulename{domrange:sfun}, so upcasting a function type shrinks its domain.

Range types are dual.
The elimination rule \rulename{domrange:srangeelim} upcasts $\TRange{\TFun{T_1}{T_2}}$ to the result type $T_2$, since such a function returns results of type at most $T_2$.
Paired with \rulename{domrange:srangeintro}, the projection is equivalent to $T_2$ on a precise arrow, and the congruence rule \rulename{domrange:srange} is covariant, again mirroring \rulename{domrange:sfun}.

Notably, the arrows appearing in \rulename{domrange:sdomintro} and \rulename{domrange:srangeelim} may be uninhabited.
For example, $\TTop <: \TDom{\TFun{\TTop}{\TRange{\TTop}}}$ is permitted, where the arrow $\TFun{\TTop}{\TRange{\TTop}}$ serves only as a type-level witness and need not correspond to any function.
Since no term inhabits such arrows, these rules do not compromise type safety.
This primitive treatment, with no well-formedness obligation on the projected type, is a deliberate design choice; 
we contrast it with the type destructors of \citet{DBLP:journals/iandc/HofmannP02} in \Cref{domrange:sec:casestudy-destructors}.

\subsection{Term Typing}

\Cref{domrange:fig:formal-typing} presents the typing rules, standard for \Fsub except for application.

\begin{figure}[t]
\setlength{\afterruleskip}{\bigskipamount}
\small
\begin{mdframed}
\judgement{Typing}{\BOX{\strut \hastype{\G}{t}{T} }}\\
\typicallabel{tsub}
\begin{tabular}{b{.4\linewidth}@{}b{.58\linewidth}}
  \infrule[\ruledef{domrange:tcst}{t-cst}]{
  }{
    \hastype{\G}{c}{B}
  }
  &
  \infrule[\ruledef{domrange:tsub}{t-sub}]{
    \stp{\G}{T_1}{T_2} \qquad
    \hastype{\G}{t}{T_1}
  }{
    \hastype{\G}{t}{T_2}
  }
  \\[0.5em] %
  \infrule[\ruledef{domrange:tvar}{t-var}]{
    x : T \in \G   
  }{
    \hastype{\G}{x}{T}
  }
  &
  \infrule[\ruledef{domrange:tappdomrange}{t-app-dr}]{
    \hastype{\G}{f}{F} \qquad
    \hastype{\G}{t}{\TDom{F}}
  }{
    \hastype{\G}{f ~ t}{\TRange{F}}
  } \\ [0.5em]
\end{tabular}
\begin{tabular}{b{.45\linewidth}@{}b{.53\linewidth}}
  \infrule[\ruledef{domrange:tabs}{t-abs}]{
    \hastype{\ctxext{\G}{x : T_1}}{t}{T_2}
  }{
    \hastype{\G}{\lambda x . t}{\TFun{T_1}{T_2}}
  }
  &
  \color{gray}{\infrule[\ruledef{domrange:tapp}{t-app}]{
    \hastype{\G}{f}{\TFun{T_1}{T_2}} \qquad
    \hastype{\G}{t}{T_1}
  }{
    \hastype{\G}{f ~ t}{T_2}
  }}
  \\ [0.5em]
  \infrule[\ruledef{domrange:ttabs}{t-tabs}]{
    \hastype{\ctxext{\G}{X <: U}}{t}{T}
  }{
    \hastype{\G}{ \TLam{X}{t}}{\TAll{U}{T}}
  }
  &
  \infrule[\ruledef{domrange:ttapp}{t-tapp}]{
    \hastype{\G}{f}{\TAll{U}{T}}
  }{
    \hastype{\G}{\TApp{f}{U}}{\substty[X]{U}{T}}
  }
\end{tabular} \\[1.5em]
\caption{Typing rules of \domranlang calculus. The standard application rule \rulefmt{t-app} (in {\color{gray} gray}) is implied by the domain and range application rule \rulefmt{t-app-dr} via subtyping on both domain and range types.
} \label{domrange:fig:formal-typing}
\end{mdframed}

\vspace{-2ex}
\end{figure} 
The new rule \rulename{domrange:tappdomrange} types an application without exposing the function type as an arrow.
To apply $f : F$, the argument must have the domain type $\TDom{F}$, and the result has type $\TRange{F}$, so the application is typed while $F$ stays fully abstract.
The obligation that $F$ is an arrow is discharged by the argument, since subtyping alone cannot derive $\TDom{F}$ unless $F$ is a subtype of an arrow type, exactly the witness property of \Cref{domrange:sec:delay-whysound}.
For the same reason, ill-fitting calls are rejected; when $F$ is a non-arrow such as $\TBool$, no value's type is a subtype of $\TDom{\TBool}$, so a term of type $\TBool$ can never actually be applied.

The standard rule \rulename{domrange:tapp} is derivable, hence shown in gray.
Given $f : \TFun{A}{B}$ and an argument of type $A$, rule \rulename{domrange:sdomintro} upcasts the argument to $\TDom{\TFun{A}{B}}$, the call types at $\TRange{\TFun{A}{B}}$ by \rulename{domrange:tappdomrange}, and \rulename{domrange:srangeelim} upcasts the result to $B$.

Finally, type application \rulename{domrange:ttapp} instantiates a quantifier at its declared bound.
Instantiation at any $S <: U$ is recovered by first lowering the bound through the subtyping \rulename{domrange:sall}, from $\TAll{U}{T}$ to $\TAll{S}{T}$, and then instantiating at $S$.
\section{Logical Relations} \label{domrange:sec:LR}

In this section, we prove weak normalization and semantic type soundness for \domranlang by logical relations \cite{LogicalTypeSoundness}.
Each type $T$ receives a \emph{value interpretation}, the set of well-behaved values inhabiting it; 
and the fundamental theorem (\Cref{domrange:thm:fundamental}) states that every well-typed term semantically inhabits the interpretation of its type.
Notably, our static type system is declarative, while the wording ``semantic (sub)typing'' refers to the proof techniques.

Our development extends the big-step definitional interpreters for \Fsub of \citet{DBLP:conf/popl/AminR17}.
We introduce a \emph{path selection} technique for domain and range types, whose interpretations cannot be fixed until the underlying type resolves to an arrow.
The technique is motivated from \citet{DBLP:conf/ecoop/WangR17}, 
who use bound selectors to reason about upper or lower bounds for abstract type members.

\subsection{Syntax, Reduction, and Path Selection} \label{domrange:sec:LR-syntax}

\Cref{domrange:fig:LR-syntax} presents the syntax and notation of our logical relation.

\begin{figure}[t]\small
\begin{mdframed}
\begin{minipage}[t]{1.0\textwidth}
\judgement{Syntax of Logical Relations}{\BOX{\domranlang}}\small %
  \[\begin{array}{l@{\qquad}l@{\qquad}l@{\qquad}l}
    v       & ::= & c \mid \vabs{H}{t} \mid \vtabs{H}{t}   & \text{Values}      \\ 
    \mathbb{V} & ::= & \{ \valsl{v}{s}  \}   & \text{Value Type} \\
    V       & \in & \mathbb{V}  & \text{Value Type Variables}    \\ [2ex] 

    H       & ::= & \varnothing \mid H,\, (x,\,v)   & \text{Value Environment} \\
    \rho & ::= & \varnothing \mid \rho,\, (X,\, V)   & \text{Value Type Environment} \\[2ex]
  
  \end{array}\]    
    
\judgement{\pathselection{Selection Path}}{}\small %
  \[\begin{array}{l@{\qquad}l@{\qquad}l@{\qquad}l@{\qquad}l}
    \pathselection{sl}      & ::= &  \pathselection{\sldom} \mid \pathselection{\slrange}  &  &  \text{Selector} \\
    \pathselection{s}     & ::= & \slroot \mid \pathselection{\slappend{sl}{s}} &  & \text{Path} \\[2ex]
    \withpath{V}{s}        & ::= & \valtypeset[v]{ \valsl{v}{s} \in V } &  & \text{Value Type with Path}  \\[2ex]
  \end{array}\]

\end{minipage} 

\caption{The syntax and path selection of the logical relations for \domranlang calculus.} \label{domrange:fig:LR-syntax}
\end{mdframed}
\vspace{-2ex}
\end{figure} 
Values are constants, $\lambda$-abstractions, and type $\Lambda$-abstractions.
The two abstractions are closures, pairing the binder with the value environment $H$ captured when they were reduced, where $H$ is a partial map from variables to values.
Reduction is big-step, written $H,\, t \Downarrow v$ and defined in \Cref{domrange:fig:LR-bigstep}.

\begin{figure}[t]
\setlength{\afterruleskip}{\bigskipamount}
\small
\begin{mdframed}
\judgement{Big-Step Reduction}{\BOX{\strut H,\,t \Downarrow v }}\\
\typicallabel{ecst}
\begin{tabular}{b{.25\linewidth}@{}b{.73\linewidth}}
  \infrule[\ruledef{domrange:ecst}{e-cst}]{
  }{
    H,\,c \Downarrow c
  }
  &
  \infrule[\ruledef{domrange:eapp}{e-app}]{
    H,\, t_1 \Downarrow \vabs[x]{H'}{t'} \quad 
    H,\, t_2 \Downarrow v_x \quad
    (H',\,(x,v_x)),\, t' \Downarrow v
  }{
    H,\,t_1 ~ t_2 \Downarrow v
  }
\\ [0.5em]
\end{tabular}
\begin{tabular}{b{.25\linewidth}@{}b{.73\linewidth}}
  \infrule[\ruledef{domrange:evar}{e-var}]{
    H(x) = v
  }{
    H,\,x \Downarrow v
  }  
  &
  \infrule[\ruledef{domrange:etapp}{e-tapp}]{
    H,\,t \Downarrow \vtabs[X]{H'}{t'} \qquad
    H',\, t' \Downarrow v 
  }{
    H,\,\TApp{t}{T} \Downarrow v 
  } 
\\ [0.5em]
\end{tabular}
\begin{tabular}{b{.44\linewidth}@{}b{.50\linewidth}}
  \infrule[\ruledef{domrange:eabs}{e-abs}]{
  }{
    H,\, \lambda x.t \Downarrow \vabs[x]{H}{t}
  }
  &
  \infrule[\ruledef{domrange:etabs}{e-tabs}]{
  }{
    H,\, \TLam{X}{t} \Downarrow \vtabs[X]{H}{t}
  }
\end{tabular}
\\[1.5em]
\caption{Big-step Reduction of \domranlang-calculus.} \label{domrange:fig:LR-bigstep}
\end{mdframed}

\vspace{-2ex}
\end{figure} 
\paragraph{Path Selection.}
Domain and range types cannot be interpreted immediately, 
since their meaning depends on whether the underlying type eventually resolves to an arrow.
We therefore delay their interpretation.
A \emph{path selector}, \pathselection{\sldom} or \pathselection{\slrange}, records one pending projection, and a \emph{selection path} \pathselection{s} recording all pending projections is a sequence of selectors appended to the root \slroot.
Since a type has different semantic meaning under different pending projections, value types $\mathbb{V}$ are values \emph{paired} with selection paths, $\valsl{v}{s}$, rather than values alone.
The notation $\withpath{V}{s}$ fixes a value type $V$ with path $\pathselection{s}  $, yielding a plain set of values as in a traditional logical relation.
We write $V$ for value types and $\rho$ for value type environments, partial maps from type variables to value types.
At the root path no projections are pending, and the relation coincides with the standard interpretation of \Fsub.

\subsection{Semantic Interpretations} \label{domrange:sec:LR-valtype}

\Cref{domrange:fig:LR-valtype} defines the value interpretations and the relation between them under selection paths.

\subsubsection{Variance of Selection Paths.}
Selection paths carry variance.
For a subtyping $\TFun{T_1}{T_2} <: \TFun{S_1}{S_2}$, the interpretation of the arrow itself shrinks, while the interpretation of its domain grows, due to the contravariance of domain ($S_1 <: T_1$).
A \pathselection{\sldom} selector therefore reverses the variance of interpretations, while \pathselection{\slrange} preserves it.
We record this as the \emph{path positivity} $\possl{s}$, and generalize the subset relation between value interpretations to the positivity-aware relation $\slleq$, under which interpretations vary covariantly on a positive path ($\slplus{s}$) and contravariantly on a negative one ($\slminus{s}$).
This relation is how subtyping is interpreted semantically in the presence of pending projections.

\subsubsection{Value Interpretations.}
$\valtype[]{T}{s}$ reads as computing value type of syntactic type $T$ at selection path $\pathselection{s}$ under environment $\rho$.
With the path fixed, it is a plain set of values.

\begin{figure}[t]\small
\begin{mdframed}
\begin{minipage}[t]{1.0\textwidth}
\judgement{Path Positivity and Sub Value Interpretation}{\BOX{\vstp{V}{V}}}\small

  \[\begin{array}{l@{\qquad}l@{\qquad}l@{\qquad}l@{\qquad}r}
    \possl{\slroot} = \text{True}    & \possl{(\slappend{\sldom}{s})} = \neg\, \possl{s} & \possl{(\slappend{\slrange}{s})} = \possl{s}  & \text{Path Positivity} \\[1ex]
  \end{array}\]    

  \[\begin{array}{l@{\quad}l@{\quad}l@{\qquad \qquad}r}
    \slplus{s} & \defeq & \{ \pathselection{s} \mid \possl{s} \} & \text{Positive Path} \\[1ex]
    \slminus{s} & \defeq & \{ \pathselection{s} \mid \neg\, \possl{s} \} & \text{Negative Paths} \\[1ex]
    \vstp{V_1}{V_2} &\defeq& \forall \pathselection{s},\, \withpath{V_1}{\slplus{s}} \subseteq \withpath{V_2}{\slplus{s}} \, \land\, \withpath{V_2}{\slminus{s}} \subseteq \withpath{V_1}{\slminus{s}} & \text{Sub-Value Interpretation}  \\[2ex] 
  \end{array}\]

\judgement{Value Interpretations}{\BOX{\valtype{T}{s}}}\small %
  \[\begin{array}{l@{\qquad}l@{\qquad}l}
    \valtype{B}{\slroot}       & = & \{ c \}    \\[1ex]
    \valtype{\TTop}{\slplus{s}} \,,\, \valtype{\TBot}{\slminus{s}}   & = & \{ v \} \\[0.3ex]
    \valtype{\TTop}{\slminus{s}} \,,\, \valtype[]{\TBot}{\slplus{s}}  & = & \varnothing \\ [1ex]
    \valtype{X}{s}       & = & \valtypeset{ v \in \withpath{\rho(X)}{s} } \\[1ex]

    \valtype{\TFun{T_1}{T_2}}{\slroot} & = & \valtypeset[]{
      \exists H,\, t,\, v = \vabs{H}{t} \land \forall v_1,\, v_1 \in \valtype{T_1}{\slroot} \to \\ 
      & & \qquad \quad \exists v_2,\,\tevaln{(H,\,(x,v_1))}{t}{v_2} \land v_2 \in \valtype{T_2}{\slroot} }\\[0.3ex]
    \valtype{\TFun{T_1}{T_2}}{\slappend{\sldom}{s}} & = &  \valtype{T_1}{s} \\[0.3ex]
    \valtype{\TFun{T_1}{T_2}}{\slappend{\slrange}{s}} & = &  \valtype{T_2}{s} \\[1ex]

    \valtype{\TDom{T}}{s} & = & \valtype{T}{\slappend{\sldom}{s}} \\[0.3ex]
    \valtype{\TRange{T}}{s} & = & \valtype{T}{\slappend{\slrange}{s}} \\ [1ex]    

    \valtype{\TAll{U}{T}}{\slroot} & = & \valtypeset[\vtabs{H}{t}]{
      \forall V,\, \likeFunctionType{V} \to \vstp{V}{\valtype{U}{}} \to \\
      & & \qquad \qquad \qquad  \exists v,\,\tevaln{H}{t}{v} \land v \in \valtype[(\rho,\,(X, V))]{T}{\slroot}
    } \\[1ex]
    \valtype{\TAll{U}{T}}{\slplus{(\slappend{sl}{s})}}, \, \valtype{B}{\slplus{(\slappend{sl}{s})}} & = & \{v\} \\[0.3ex]
    \valtype{\TAll{U}{T}}{\slminus{(\slappend{sl}{s})}}, \, \valtype{B}{\slminus{(\slappend{sl}{s})}} & = & \varnothing \\[1ex]

    \end{array}\]      

\end{minipage} 

\caption{The definitions of value interpretations for \domranlang calculus.} \label{domrange:fig:LR-valtype}
\end{mdframed}
\vspace{-2ex}
\end{figure} 
The interpretation of $\TDom{T}$ and $\TRange{T}$ implements the delay, pushing the corresponding selector to the current path and interpreting $T$ at the new path.
For example, $v \in \valtype[\varnothing]{\TDom{T}}{\slroot}$ holds exactly when $v \in \valtype[\varnothing]{T}{\slappend{\sldom}{\slroot}}$.
Dually, the interpretation of an arrow type under a non-empty path does not describe function values,
and it instead resolves the pending projection against the arrow's structure.
If the first selector is $\pathselection{\sldom}$ for $\valtype[\varnothing]{\TFun{T_1}{T_2}}{\slappend{\sldom}{\slroot}}$, the interpretation becomes that of $T_1$, and $T_2$ is irrelevant.
Interpreting $\valtype[\varnothing]{\TDom{\TFun{T_1}{T_2}}}{s}$ thus first delays, reaching $\valtype[\varnothing]{\TFun{T_1}{T_2}}{\slappend{\sldom}{s}}$, and then resolves to $\valtype[\varnothing]{T_1}{s}$, extracting the domain interpretation.

At the root path, an arrow type is interpreted as usual, as the set of closures \vabs[x]{H}{t} that map arguments to results semantically: whenever a value $v_1$ inhabits the interpretation of $T_1$, evaluating the body $t$ under $H$ extended with $v_1$ yields a value in the interpretation of $T_2$.
The interpretation of bounded quantification follows likewise.
The body must inhabit its interpretation under any value interpretation $V$ for the abstracted variable that is semantically bounded, via $\slleq$, by the interpretation of the syntactic bound $U$.
The additional constraint $\likeFunctionType{V}$ requires the \emph{well-formed functionality} of $V$ (\Cref{domrange:sec:LR-invariants}), justified at type instantiation.

Base types are interpreted as sets of value constants by definition.
$\TTop$ is interpreted as the set of all values under \emph{positive} selection paths, and $\TBot$ as the set of all values under \emph{negative} paths.
The restriction on $\TTop$ is safe by a case analysis on how a type such as $\TDom{\TTop}$ arises under the path \pathselection{\slappend{\sldom}{\slroot}}: through congruence \rulename{domrange:sdom} from $\TTop <: T$, forcing $T$ to be another top, or through introduction $\TTop <: \TDom{\TFun{\TTop}{T}}$; in both cases the interpretation reduces to that of $\TTop$ under a positive path.
Following the variance, base types and quantifiers reduce to the interpretation of $\TTop$ under non-root positive paths, and to that of $\TBot$ under negative ones.

\subsection{Environment Interpretations and Invariants} \label{domrange:sec:LR-invariants}

\begin{figure}\small
\begin{mdframed}
\judgement{Well-Formed Functionality}{\BOX{\likeFunctionType{V}}}%
  
\[\begin{array}{l@{\quad}l@{\quad}l@{}l}
  \likeFunctionType{V} &\defeq& \forall v,\,\pathselection{s},\,
      v \in \withpath{V}{s} \to \forall v_1,\, v_1 \in \withpath{V}{\slprepend{s}{\sldom}} \to \exists H,\,t,\, v = \vabs[]{H}{t} \,\land \\
    & & \exists v_2,\, \tevaln{(H,\,(v,v_1))}{t}{v_2} \land v_2 \in \withpath{V}{\slprepend{s}{\slrange}} \\[2ex] 
\end{array}\]

\vspace{2ex}
\judgement{Environment and Expression Interpretation}{\BOX{\envtype{\Gamma}} \BOX{\exptype{T}}}
\[\begin{array}{l@{\quad}l@{\quad}l@{}l}
  \envtype{\Gamma} &\defeq& \envtypeset[]{\DOM \Gamma = \DOM H\, \cup\, \DOM \rho \, \land \\ 
    & & \qquad\qquad (\forall x,\, T,\, x : T \in \Gamma \to \exists v,\, H(x) = v \land v \in \valtype[]{T}{\slroot})\, \land \\
    & & \qquad\qquad (\forall X,\, U,\, X <: U \in \Gamma \to \exists V,\, \rho(X) = V \land \vstp{\valtype[]{X}{}}{\valtype[]{U}{}} \land \likeFunctionType{V} )
  } \\[1ex] 
  \exptype{T} &\defeq& \{ \langle H,\rho, t \rangle \mid \exists v,\, \tevaln{H}{t}{v} \land v \in \valtype[]{T}{\slroot}   \} \\[2ex] 
\end{array}\]

\vspace{1pt}

\caption{The environment and expression interpretation of the \domranlang-calculus. }\label{domrange:fig:LR-env-interpretation}
\end{mdframed}
\vspace{-2ex}
\end{figure} 
\Cref{domrange:fig:LR-env-interpretation} presents the environment interpretation and the well-formedness invariant for value interpretations.

\subsubsection{Well-Formed Functionality.}
The property $\likeFunctionType{V}$ states that a value interpretation behaves like interpretation of some function type whenever its domain becomes inhabited: 
if some value $v_x$ inhabits the domain projection of $V$, then application produces a result inhabiting the range projection.
For non-function types the property holds trivially, since their domain projections are uninhabited and the precondition never fires.
Notably, the selectors are \emph{prepended} to the selection path (written \pathselection{\slprepend{s}{\sldom}}) rather than appended.
Prepending makes the new projection the last to be resolved, so the property constrains the interpretation as a whole and holds independently of the pending projections already in \pathselection{s}.

The well-formed functionality is the most important invariant in our system supporting the projection-based application \rulename{domrange:tappdomrange}.
Intuitively, it states that every syntactic type can appear at the function position of application, and describes the expected behavior.

\subsubsection{Environment and Expression Interpretations.}
A syntactic environment $\Gamma$ interprets to a pair of a value environment $H$ and a value type environment $\rho$.
A term binding $x : T$ requires the value recorded in $H$ to inhabit the interpretation of $T$.
A type binding $X <: U$ requires $\rho$ to record a value interpretation $V$ that is semantically bounded by $\valtype[]{U}{\slroot}$ and satisfies well-formed functionality.

\begin{figure}\small
\begin{mdframed}
\judgement{Semantic Subtypes and Types}{\BOX{\semstp{\Gamma}{T}{T}} \BOX{\semtype{\Gamma}{t}{T}}}%
  
\[\begin{array}{l@{\quad}l@{\quad}l@{}l}
  \semstp{\Gamma}{T_1}{T_2} &\defeq& \forall \envs{H}{\rho} \in \envtype{\Gamma},\, \vstp{\valtype[]{T_1}{}}{\valtype[]{T_2}{}} \\[1ex]
  \semtype{\Gamma}{t}{T} &\defeq& \forall \envs{H}{\rho} \in \envtype{\Gamma},\, \exps{H}{V}{t} \in \exptype{T}
\end{array}\]

\vspace{2ex}
\judgement{Semantic Typing (Selected)}{\BOX{\semtype{\Gamma}{t}{T}}} \\
\typicallabel{stsub}
\begin{tabular}{b{.48\linewidth}@{}b{.5\linewidth}}
  \infrule[\ruledef{domrange:stsub}{st-sub}]{
    \semtype{\Gamma}{t}{T_1} \qquad
    \semstp{\Gamma}{T_1}{T_2}
  }{
    \semtype{\Gamma}{t}{T_2}
  }
  &
  \infrule[\ruledef{domrange:stappdr}{st-appdr}]{
    \semtype{\Gamma}{f}{F} \qquad
    \semtype{\Gamma}{t}{\TDom{F}}
  }{
    \semtype{\Gamma}{f~t}{\TRange{F}}
  }
\end{tabular} \\[1.5em]

\vspace{1pt}

\caption{The semantic subtyping and typing (selected) of the \domranlang-calculus. }\label{domrange:fig:LR-semtype}
\end{mdframed}
\vspace{-2ex}
\end{figure} 
\subsubsection{Semantic Typing and Subtyping.}
\Cref{domrange:fig:LR-semtype} defines semantic subtyping and semantic typing.
Semantic subtyping lifts $\slleq$ to open types under an environment; value bindings are irrelevant there, so only $\rho$ is consulted.
Semantic typing interprets the typing judgement over semantic environments in the standard way, so we omit its full presentation.

\subsection{Metatheory}  \label{domrange:sec:LR-metatheory}
We present the fundamental theorem and the key lemmas, and refer interested readers to our artifact for the full development.
As the higher-kinded $\hkdomranlang$ with domain and range types presented in the appendix (Section A) strictly extends the $\domranlang$,
more interesting details are listed there. 

First come the standard weakening and substitution lemmas.
\begin{lemma}[semantic weakening]
  If $\,X \notin T$, then for any value type environment $\rho$, path \pathselection{s}, and value interpretation $V$, $\valtype[]{T}{s} = \valtype[(\rho,\,(X,V))]{T}{s}$.
\end{lemma}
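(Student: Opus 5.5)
We prove the statement by structural induction on $T$, generalizing over both the environment $\rho$ and the path $\pathselection{s}$. Generalizing over $\pathselection{s}$ is essential: the projection cases change the path, and the arrow cases do so as well. Within each constructor we case on the shape of $\pathselection{s}$, following the clauses of the interpretation in \Cref{domrange:fig:LR-valtype}. In each case the goal is an extensional set equality, which we establish pointwise.

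\textbf{Cases that do not consult $\rho$.} These are base types $B$, $\TTop$ and $\TBot$, together with the non-root clauses for $\TAll{U}{T}$ and $B$. Their interpretation depends only on the path and its positivity $\possl{s}$, so both sides unfold to the same set.

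\textbf{Variable case.} Here $T = Y$, and $X \notin Y$ gives $Y \neq X$. Hence $(\rho,(X,V))(Y) = \rho(Y)$, and the two sides coincide.

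\textbf{Projection cases.} For $\TDom{T'}$ we unfold to $\valtype[]{T'}{\slappend{\sldom}{s}}$ and apply the induction hypothesis at the extended path $\slappend{\sldom}{s}$; the case $\TRange{T'}$ is symmetric. This is exactly why the hypothesis must quantify over all paths.

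\textbf{Arrow case.} At a non-root path $\slappend{\sldom}{s}$ or $\slappend{\slrange}{s}$, the arrow interpretation resolves to $T_1$ or $T_2$ at $\pathselection{s}$, and the induction hypothesis closes the case. At the root path we rewrite both occurrences, for $T_1$ and $T_2$ at $\slroot$, under the binders inside the set comprehension. This rewriting needs propositional or functional extensionality on the set-valued interpretation, which is the convention the mechanization already adopts.

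\textbf{Quantifier case.} This is the main obstacle. Take $\TAll[Y]{U}{T'}$ at the root with $X \notin \TAll[Y]{U}{T'}$. The bound side $\vstp{V'}{\valtype[]{U}{}}$ is handled by rewriting with the induction hypothesis for $U$. The $\slleq$ relation quantifies over all paths, so the hypothesis must be used in its all-paths form here. The body, however, is interpreted under $(\rho,(X,V),(Y,V'))$, not under $(\rho,(Y,V'))$ extended by $X$. So the induction hypothesis cannot be applied directly. We will handle this in one of two ways:
\begin{itemize}
\item \textbf{Named representation.} If $Y \neq X$, we first prove a small exchange lemma: environments that agree on all variables give equal interpretations. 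We then use it to reorder the bindings before invoking the hypothesis. If $Y = X$, shadowing makes $V$ irrelevant, and the same agreement lemma gives the result.
\item \textbf{Locally nameless representation (likely in the Rocq artifact).} We instead generalize the statement to insertion of $(X,V)$ at an arbitrary position of $\rho$. The variable case then becomes an index-shifting argument.
\end{itemize}
In either form we first strengthen the induction hypothesis in this way and then run the same case analysis. The $\likeFunctionType{V'}$ premise does not mention $\rho$, so it is unaffected.
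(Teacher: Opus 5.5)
Your proposal is correct and takes essentially the same approach as the paper, whose proof is simply ``by induction on the type $T$''. You generalize the hypothesis over $\rho$ and the path, and in the quantifier case, where the new binding sits beneath the bound variable, you add an exchange (or splice-style insertion) argument; these are exactly the details that the paper's one-line proof leaves implicit.
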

\begin{lemma}[semantic substitution]
  For any value type environment $\rho$ and path \pathselection{s}, if $\withpath{V}{s} = \valtype[]{T'}{s}$ for some type $T'$, then $\valtype[(\rho,\,(X,V))]{T}{s} = \valtype[]{\substty[X]{T'}{T}}{s}$.
\end{lemma}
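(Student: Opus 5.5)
We prove the semantic substitution lemma by structural induction on $T$, generalizing over the path \pathselection{s}. The hypothesis must be strengthened first. As stated, it fixes a single path, yet the interpretations of $\TDom{T}$ and $\TRange{T}$, and of an arrow under a non-root path, recurse at different paths. We therefore read the hypothesis as $\withpath{V}{s'} = \valtype[]{T'}{s'}$ for every path \pathselection{s'}, which is how the lemma is used at type instantiation, where $V$ is chosen as the interpretation of $T'$ itself. We also quantify over $\rho$ inside the induction, so that the quantifier case can extend the environment.

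The cases follow the clauses of the value interpretation. For base types, $\TTop$, and $\TBot$, both sides depend only on the path and its positivity, so they agree immediately. For a variable $Y \neq X$, both sides read $\rho(Y)$. For $X$ itself, the left side is $\withpath{V}{s}$ and the right side is $\valtype[]{T'}{s}$, which are equal by the strengthened hypothesis. For $\TDom{T_0}$ and $\TRange{T_0}$, we unfold one step to $T_0$ at path $\slappend{\sldom}{s}$ or $\slappend{\slrange}{s}$ and apply the induction hypothesis at that new path; this is exactly where generalizing over paths is needed. For an arrow at a non-root path, the interpretation reduces to a component at the tail path, and the induction hypothesis applies. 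For an arrow at the root, we rewrite the interpretations of $T_1$ and $T_2$ at \slroot{} using the induction hypothesis, under the binders for $v_1$ and $v_2$, via propositional or set extensionality.

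The main obstacle is the quantifier case $\TAll[Y]{U}{T_0}$ at the root. Substitution goes under the binder, so we assume $Y \neq X$ and $Y \notin \mathrm{fv}(T')$ by renaming. The right side quantifies over $V'$ with $\likeFunctionType{V'}$ and $\vstp{V'}{\valtype[]{\substty[X]{T'}{U}}{\slroot}}$. For the bound, we rewrite with the induction hypothesis on $U$ at every path, which is legitimate because $\slleq$ ranges over all paths. For the body, we apply the induction hypothesis on $T_0$ under the environment $(\rho,(X,V)),(Y,V')$. This requires commuting the bindings for $X$ and $Y$, which is harmless because they are distinct. It also requires transporting the hypothesis on $V$ to the extended environment, that is, $\valtype[(\rho,(Y,V'))]{T'}{s'} = \valtype[]{T'}{s'}$, which follows from semantic weakening since $Y \notin T'$.

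At non-root paths, the quantifier case is immediate, because the interpretation depends only on positivity. In the mechanization, we expect the real effort to go into the binder representation: either de Bruijn shifting commutes with the environment extension, or an explicit weakening and exchange lemma is needed.
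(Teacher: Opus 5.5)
Your proof is correct and follows the paper's argument, which is simply structural induction on $T$. Your case analysis (generalizing over the path and over $\rho$, and in the quantifier case renaming the binder, commuting the two environment bindings and invoking semantic weakening) is exactly what that induction needs. Strengthening the hypothesis to all paths is also necessary rather than merely convenient: already for $T = \TDom{X}$, the conclusion at path $\pathselection{s}$ concerns $\withpath{V}{\slappend{\sldom}{s}}$, which a hypothesis at the single path $\pathselection{s}$ does not constrain. The paper's statement should be read your way, which is also how it is used at type instantiation, where $V$ is the whole interpretation of $T'$.
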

\begin{proof}
  By induction on the type $T$.
\end{proof}

Next, every value interpretation drawn from a well-interpreted environment satisfies well-formed functionality, which justifies the invariant assumed at type instantiation.
\begin{lemma}[semantic functionality]
  If $\,\envs{H}{\rho} \in \envtype{\Gamma}$, then for any type $T$, $\likeFunctionType{\valtype{T}{}}$.
\end{lemma}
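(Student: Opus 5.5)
The proof is by structural induction on $T$, with the statement generalized over every path: for fixed $\envs{H}{\rho} \in \envtype{\Gamma}$, we show that for all $v$, $\pathselection{s}$ and $v_1$, if $v \in \valtype{T}{s}$ and $v_1 \in \valtype{T}{\slprepend{s}{\sldom}}$, then $v$ is a closure whose application to $v_1$ yields a value in $\valtype{T}{\slprepend{s}{\slrange}}$.

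\textbf{Path algebra.} Before the induction I would establish a few facts about selection paths, each by induction on $\pathselection{s}$.
\begin{itemize}
\item Prepending commutes with appending: $\slprepend{(\slappend{sl}{s})}{sl'} = \slappend{sl}{(\slprepend{s}{sl'})}$.
\item $\slprepend{\slroot}{sl} = \slappend{sl}{\slroot}$.
\item Prepending $\pathselection{\sldom}$ flips positivity, $\possl{(\slprepend{s}{\sldom})} = \neg\,\possl{s}$, while prepending $\pathselection{\slrange}$ preserves it.
\item A prepended path is never the root.
\end{itemize}

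\textbf{Base cases.} These are vacuity arguments on polarity.
\begin{itemize}
\item $\TTop$ on a positive $\pathselection{s}$: the prepended domain path is negative, so $\valtype{\TTop}{\slprepend{s}{\sldom}}=\varnothing$ and the premise on $v_1$ is false. On a negative $\pathselection{s}$, already $v \in \valtype{\TTop}{s}$ is impossible.
\item $\TBot$: symmetric. On a positive path $v$ cannot exist; on a negative path the prepended domain path is positive and empty.
\item Base types $B$ and quantifiers $\TAll{U}{T'}$: the prepended domain path is non-root. Its polarity is negative when $\pathselection{s}$ is the root or a positive non-root path, so the interpretation there is $\varnothing$. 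When $\pathselection{s}$ is negative and non-root, $v$ itself cannot exist.
\item Type variable $X$: $\valtype{X}{s}$ coincides pathwise with $\withpath{\rho(X)}{s}$, so the claim is exactly $\likeFunctionType{\rho(X)}$, which the environment interpretation $\envtype{\Gamma}$ supplies for every bound variable.
\end{itemize}

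\textbf{Inductive cases.}
\begin{itemize}
\item Projections: $\valtype{\TDom{T'}}{s} = \valtype{T'}{\slappend{\sldom}{s}}$. By the commutation fact,
\[
\valtype{\TDom{T'}}{\slprepend{s}{sl}} = \valtype{T'}{\slappend{\sldom}{(\slprepend{s}{sl})}} = \valtype{T'}{\slprepend{(\slappend{\sldom}{s})}{sl}},
\]
so the induction hypothesis for $T'$ at path $\pathselection{\slappend{\sldom}{s}}$ gives the claim. $\TRange{T'}$ is identical.
\item Arrow at the root: $v \in \valtype{\TFun{T_1}{T_2}}{\slroot}$ is a closure. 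The prepended domain path is $\pathselection{\slappend{\sldom}{\slroot}}$, whose interpretation is $\valtype{T_1}{\slroot}$, and the prepended range path's interpretation is $\valtype{T_2}{\slroot}$. The definition of the arrow interpretation then directly provides the evaluation and membership of the result.
\item Arrow at $\pathselection{\slappend{sl}{s'}}$: the interpretation resolves to $\valtype{T_i}{s'}$, with $i=1$ for $\pathselection{\sldom}$ and $i=2$ for $\pathselection{\slrange}$. By commutation, the prepended paths resolve to the same $T_i$ at $\pathselection{\slprepend{s'}{sl'}}$, so the induction hypothesis on $T_i$ at $\pathselection{s'}$ closes the case.
\end{itemize}

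\textbf{Main obstacle.} I expect the bookkeeping of prepend versus append to be the main difficulty: getting the commutation lemma right, and ensuring the induction hypothesis is quantified over all paths so it can be used at $\pathselection{\slappend{\sldom}{s}}$ and at $\pathselection{s'}$. A related subtlety is the root/non-root split for $B$ and quantifiers, which needs the polarity of a prepended path relative to $\pathselection{s}$. Once these path lemmas are in place, each case is a one-line unfolding.
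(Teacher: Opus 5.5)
Your proposal is correct and follows the same route as the paper: induction on $T$, with the type-variable case discharged by the $\likeFunctionType{\cdot}$ invariant recorded in $\envtype{\Gamma}$, and every other case closed either vacuously by polarity or by the induction hypothesis at a shifted path, which works because the property already quantifies over all selection paths. Your commutation and polarity lemmas for prepend versus append are the path bookkeeping that the paper's one-line proof leaves implicit.
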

\begin{proof}
  By induction on the type $T$. In the type variable case, the environment interpretation guarantees the invariant for every recorded interpretation. The remaining cases follow from the induction hypothesis, as the property quantifies over all selection paths.
\end{proof}

\begin{lemma}[semantic subtyping] \label{domrange:thm:semantic-subtyping}
  If $\,\stp{\Gamma}{T_1}{T_2}$, and $\envs{H}{\rho} \in \envtype{\Gamma}$, then $\vstp{\valtype{T_1}{}}{\valtype{T_2}{}}$.
\end{lemma}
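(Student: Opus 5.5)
We prove the statement by induction on the derivation of $\stp{\Gamma}{T_1}{T_2}$, generalizing over $\envs{H}{\rho}$ so that the \rulename{domrange:sall} case can use an extended environment. Two preliminary observations about $\slleq$ are needed first. It is reflexive, and it is transitive, because both conjuncts of its definition are pointwise set inclusions at each fixed path. Its definition also quantifies over \emph{every} path $\pathselection{s}$, so an induction hypothesis for a subterm may be used at any path, including a path extended by one selector. By the positivity clauses, appending $\pathselection{\sldom}$ swaps which of the two inclusions is demanded, and appending $\pathselection{\slrange}$ leaves it unchanged. For each rule we then check both conjuncts of $\slleq$ at an arbitrary path $\pathselection{s}$.

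The structural and projection rules are short.
\begin{itemize}
\item \rulename{domrange:srefl} and \rulename{domrange:strans} follow from reflexivity and transitivity of $\slleq$.
\item \rulename{domrange:svar} is read directly off the clause for $X <: U$ in $\envtype{\Gamma}$.
\item \rulename{domrange:sdomintro}, \rulename{domrange:sdomelim}, \rulename{domrange:srangeintro} and \rulename{domrange:srangeelim} are equalities. For example, $\valtype{\TDom{\TFun{T_1}{T_2}}}{s} = \valtype{\TFun{T_1}{T_2}}{\slappend{\sldom}{s}} = \valtype{T_1}{s}$ by unfolding, so both inclusions hold trivially.
\item For \rulename{domrange:sdom}, unfolding reduces the goal at $\pathselection{s}$ to comparing $\valtype{S}{\slappend{\sldom}{s}}$ with $\valtype{T}{\slappend{\sldom}{s}}$. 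Since $\possl{(\slappend{\sldom}{s})} = \neg\,\possl{s}$, the induction hypothesis $T \slleq S$ at the path $\slappend{\sldom}{s}$ supplies exactly the flipped inclusion.
\item \rulename{domrange:srange} is the same argument without the flip.
\end{itemize}

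The remaining cases split on the shape of the path.

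For \rulename{domrange:sfun}:
\begin{itemize}
\item At the root path, the usual argument applies. An argument in $\valtype{T_1}{\slroot}$ lies in $\valtype{S_1}{\slroot}$ by the induction hypothesis at the positive root. The closure's result in $\valtype{S_2}{\slroot}$ then lies in $\valtype{T_2}{\slroot}$. The negative-root conjunct is vacuous, because the root is positive.
\item At a path $\slappend{\sldom}{s}$, the arrow interpretation resolves to $\valtype{S_1}{s}$ versus $\valtype{T_1}{s}$. The polarity of $\slappend{\sldom}{s}$ is opposite to that of $\pathselection{s}$, so the induction hypothesis $T_1 \slleq S_1$ at $\pathselection{s}$ gives precisely the required inclusion.
\item The $\pathselection{\slrange}$ case is covariant and uses $S_2 \slleq T_2$.
\end{itemize}

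For \rulename{domrange:sall}:
\begin{itemize}
\item At non-root paths, both sides are the same constant set ($\{v\}$ or $\varnothing$ depending on polarity), so the inclusions hold.
\item At the root, fix $V$ with $\likeFunctionType{V}$ and $V \slleq \valtype{U_2}{}$. Transitivity with the induction hypothesis $U_2 \slleq U_1$ gives $V \slleq \valtype{U_1}{}$. Hence the given value evaluates its body into $\valtype[(\rho,\,(X,V))]{S}{\slroot}$.
\item The pair $\envs{H}{(\rho,\,(X,V))}$ lies in $\envtype{\Gamma,\, X <: U_2}$. To check the bound clause there, we use semantic weakening to see that $\valtype{U_2}{}$ is unchanged under the extended $\rho$.
\item The induction hypothesis for $S <: T$ under this extended environment then transports the result into $\valtype[(\rho,\,(X,V))]{T}{\slroot}$.
\end{itemize}

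We expect the main obstacle to be \rulename{domrange:stop} and \rulename{domrange:sbot}. There $T$ is arbitrary, so the definitions of $\TTop$ and $\TBot$ must agree in polarity with every other type at every path, including base types and quantifiers at non-root paths and arrows at the root.

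For \rulename{domrange:stop}:
\begin{itemize}
\item At a positive path, $\valtype{\TTop}{\slplus{s}}$ is the set of all values, so the inclusion $\valtype{T}{s} \subseteq \valtype{\TTop}{s}$ is trivial.
\item At a negative path, we need $\valtype{\TTop}{\slminus{s}} \subseteq \valtype{T}{\slminus{s}}$. This holds because the left side is $\varnothing$.
\end{itemize}
\rulename{domrange:sbot} is exactly dual. The point to verify carefully is that positive and negative paths are treated consistently. Here the root counts as positive, which is why $\valtype{B}{\slroot} = \{c\}$ sits inside $\valtype{\TTop}{\slroot}$. The same fact explains why arrows and quantifiers at the root impose no constraint on the negative conjunct. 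We would discharge these cases by a uniform case split on $\possl{s}$ rather than on $T$.
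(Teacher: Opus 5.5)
Your proposal is correct and takes essentially the same route as the paper: induction on the subtyping derivation, generalized over the environment, with a case analysis on the shape and positivity of the selection path. Where the paper mentions an inner induction on the path, your plain case split is enough, because the outer induction hypothesis already holds at every path. One small point to tighten: in the \textsc{s-all} case, semantic weakening is needed for every binding of $\Gamma$ when you extend $\rho$ with $(X,V)$, not only for the bound $U_2$.
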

\begin{proof}
  By induction on the subtyping derivation, with an inner induction on the selection path and a case analysis on its positivity.
\end{proof}

With these lemmas established, we obtain the fundamental theorems.
\begin{theorem}[fundamental theorem of subtyping] \label{domrange:thm:stp-fundamental}
  If a type $T_1$ is syntactically a subtype of $T_2$, \ie, $\stp{\Gamma}{T_1}{T_2}$, then $T_1$ is semantically a subtype of $T_2$, \ie, $\semstp{\Gamma}{T_1}{T_2}$.
\end{theorem}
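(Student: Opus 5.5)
The statement is essentially a repackaging of \Cref{domrange:thm:semantic-subtyping}, so I would derive it directly from that lemma. Unfolding the definition of semantic subtyping in \Cref{domrange:fig:LR-semtype}, the goal $\semstp{\Gamma}{T_1}{T_2}$ means: for every $\envs{H}{\rho} \in \envtype{\Gamma}$, we have $\vstp{\valtype[]{T_1}{}}{\valtype[]{T_2}{}}$. So I would fix an arbitrary pair $\envs{H}{\rho}$ together with a proof that it lies in $\envtype{\Gamma}$. I would then apply \Cref{domrange:thm:semantic-subtyping} to the given derivation $\stp{\Gamma}{T_1}{T_2}$ and this environment, which yields exactly the required $\slleq$ relation. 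Only $\rho$ is consulted, so $H$ plays no role.

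All the substance therefore lives in \Cref{domrange:thm:semantic-subtyping}, and I would organize that proof as follows. The outer induction is on the subtyping derivation. Inside each case, I would quantify over an arbitrary path $\pathselection{s}$ and split on $\possl{s}$, proving the $\subseteq$ inclusion in the direction that the positivity dictates.

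The cases group as follows.
\begin{itemize}
\item \rulename{domrange:srefl} is immediate.
\item \rulename{domrange:strans} composes the two inclusions in each polarity.
\item \rulename{domrange:svar} uses the $\slleq$ bound recorded in $\envtype{\Gamma}$.
\item \rulename{domrange:stop} and \rulename{domrange:sbot} need an inner induction on the path. At positive paths, $\TTop$ is all values and $\TBot$ is empty; at negative paths the roles swap. These extremes dominate any interpretation in the appropriate direction.
\item The four intro/elim rules hold by definitional unfolding. For example, $\valtype{\TDom{\TFun{T_1}{T_2}}}{s} = \valtype{\TFun{T_1}{T_2}}{\slappend{\sldom}{s}} = \valtype{T_1}{s}$, so both directions are equalities.
\item For the congruence rules \rulename{domrange:sdom} and \rulename{domrange:srange}, I instantiate the induction hypothesis at the extended path $\slappend{\sldom}{s}$ (respectively $\slappend{\slrange}{s}$). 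Since $\sldom$ flips positivity, the hypothesis for $T <: S$ at the flipped polarity gives exactly the reversed inclusion needed for $\TDom{S} <: \TDom{T}$.
\end{itemize}

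The main obstacle is \rulename{domrange:sfun} and \rulename{domrange:sall} at non-root paths. Here the interpretation depends on the first selector pushed, but paths are built by appending. I therefore expect to need a lemma that decomposes a path by its innermost selector, so that $\valtype{\TFun{S_1}{S_2}}{s}$ for $\pathselection{s}\neq\slroot$ reduces to a component interpreted at a shorter path with a known polarity shift. This is the ``inner induction on the selection path'' mentioned in the lemma's proof.

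At the root path, \rulename{domrange:sfun} is the usual closure argument. \rulename{domrange:sall} additionally needs the extended environment to remain well-interpreted, which uses $\likeFunctionType{V}$ and semantic weakening. At non-root paths, quantifiers and base types collapse to the $\TTop$/$\TBot$ interpretation, so those cases mirror \rulename{domrange:stop}. I would also first check a well-definedness point: the arrow interpretation at $\slappend{\sldom}{s}$ must agree with resolving the innermost $\sldom$, and this convention should be fixed before starting the induction.
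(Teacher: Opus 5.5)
Your proposal is correct and matches the paper's proof: fixing $\langle H,\rho\rangle \in \envtype{\Gamma}$ and applying the semantic subtyping lemma to the given derivation yields exactly the required $\slleq$ relation, which is the whole argument. One small remark on your extra sketch of that lemma: since $\valtype{\TFun{T_1}{T_2}}{\slappend{\sldom}{s}} = \valtype{T_1}{s}$, an arrow resolves the \emph{most recently appended} selector. The non-root arrow case therefore follows from a plain case split on the path plus the outer induction hypothesis, so you do not need a separate path-decomposition lemma.
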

\begin{proof}
  Immediate from the semantic subtyping lemma (\Cref{domrange:thm:semantic-subtyping}).
\end{proof}

\begin{theorem}[fundamental theorem] \label{domrange:thm:fundamental}
  If a term $t$ is syntactically well-typed, \ie, $\hastype{\Gamma}{t}{T}$, then $t$ is semantically well-typed, \ie, $\semtype{\Gamma}{t}{T}$.
\end{theorem}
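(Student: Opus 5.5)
The proof goes by induction on the typing derivation $\hastype{\Gamma}{t}{T}$. We establish one semantic typing lemma per syntactic rule, in the style of \rulename{domrange:stsub} and \rulename{domrange:stappdr}, and then chain them. Throughout, we fix $\envs{H}{\rho} \in \envtype{\Gamma}$ and must exhibit $v$ with $\tevaln{H}{t}{v}$ and $v \in \valtype{T}{\slroot}$.

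The rules other than application are routine.
\begin{itemize}
\item \emph{Constants.} These are immediate from $\valtype{B}{\slroot} = \{c\}$.
\item \emph{Variables.} These read off the term part of $\envtype{\Gamma}$.
\item \emph{Subsumption} (\rulename{domrange:tsub}). Invoke the semantic subtyping lemma (\Cref{domrange:thm:semantic-subtyping}) and instantiate $\slleq$ at the root path, which is positive. This gives $\valtype{T_1}{\slroot} \subseteq \valtype{T_2}{\slroot}$.
\item \emph{$\lambda$-abstraction} (\rulename{domrange:tabs}). The closure $\vabs{H}{t}$ is produced by \rulename{domrange:eabs}. For any $v_1 \in \valtype{T_1}{\slroot}$, the extended pair $\envs{(H,(x,v_1))}{\rho}$ lies in $\envtype{\ctxext{\Gamma}{x:T_1}}$, so the induction hypothesis yields the required $v_2$.
\item \emph{Type abstraction} (\rulename{domrange:ttabs}). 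Given $V$ with $\likeFunctionType{V}$ and $\vstp{V}{\valtype{U}{}}$, extend $\rho$ with $(X,V)$. The extended environment satisfies the type part of $\envtype{\ctxext{\Gamma}{X<:U}}$. To transport the interpretation of $U$ and of earlier bindings to the extended $\rho$, we use semantic weakening.
\item \emph{Type application} (\rulename{domrange:ttapp}). Instantiate the quantifier interpretation with $V := \valtype{U}{}$ viewed as a path-indexed family. Its functionality is supplied by the semantic functionality lemma, and the bound condition is reflexivity of $\slleq$. Semantic substitution then rewrites $\valtype[(\rho,(X,V))]{T}{\slroot}$ into $\valtype{\substty[X]{U}{T}}{\slroot}$. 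The hypothesis of substitution holds by construction at every path.
\end{itemize}

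The application rule \rulename{domrange:tappdomrange} is the crux. By the induction hypotheses, $f \Downarrow v$ with $v \in \valtype{F}{\slroot}$, and $t \Downarrow v_1$ with $v_1 \in \valtype{\TDom{F}}{\slroot} = \valtype{F}{\slappend{\sldom}{\slroot}}$.
\begin{itemize}
\item Apply the semantic functionality lemma to $V := \valtype{F}{}$ at $\pathselection{s} = \slroot$. Prepending $\sldom$ to the root is the same path as appending it, namely $\slappend{\sldom}{\slroot}$.
\item The lemma then yields that $v$ is a closure $\vabs{H'}{t'}$, and that $t'$ under $H'$ extended with $v_1$ evaluates to $v_2 \in \withpath{V}{\slappend{\slrange}{\slroot}} = \valtype{\TRange{F}}{\slroot}$.
\item Rule \rulename{domrange:eapp} assembles these evaluations into $f~t \Downarrow v_2$.
\end{itemize}
Thus the witness property is exactly what $\likeFunctionType{\cdot}$ packages: the domain inhabitant forces $v$ to be a function.

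The main obstacle is not in this induction but in the semantic functionality lemma it relies on, specifically its arrow case with non-root paths. There one must show that the prepended $\sldom$ and $\slrange$ selectors resolve consistently against the arrow and against the pending selectors in $\pathselection{s}$. At $\pathselection{s}=\slroot$ this is literally the arrow interpretation. For longer paths I would argue by induction on $\pathselection{s}$, peeling the first pending selector from the arrow so that the claim reduces to the induction hypothesis on $T_1$ or $T_2$ at a shorter path. Since that lemma is stated earlier, here I only need to verify that its root instance matches the big-step rule. I also need to check that the variable binder in the closure agrees with the one used in $\likeFunctionType{\cdot}$, which is a bookkeeping matter in the mechanization.
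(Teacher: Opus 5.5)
Your proposal is correct and follows the paper's proof. Both argue by induction on the typing derivation: subsumption is discharged by the semantic subtyping lemma read at the positive root path, and the domain--range application case by the functionality invariant $\likeFunctionType{\valtype{F}{}}$ instantiated at $\slroot$, where prepending and appending a selector coincide. In your side remark on the auxiliary lemma, no separate induction on the path is needed: splitting on the outermost selector of $\pathselection{s}$ reduces the arrow case directly to the structural hypothesis for $T_1$ or $T_2$, which already quantifies over all paths.
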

\begin{proof}
  By induction on the typing derivation, analyzing the selection path and its positivity where projections are involved. The subsumption case follows from \Cref{domrange:thm:stp-fundamental}.
\end{proof}

The fundamental theorem yields weak normalization and type soundness.
\begin{corollary}[safety]
  If a term $t$ is syntactically well-typed in the empty context, \ie, $\hastype{\varnothing}{t}{T}$, then $\tevaln{\varnothing}{t}{v}$ in finitely many steps, and $v \in \valtype[\varnothing]{T}{\slroot}$.
\end{corollary}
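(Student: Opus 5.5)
The safety corollary is obtained by instantiating the fundamental theorem (\Cref{domrange:thm:fundamental}) at the empty context. From $\hastype{\varnothing}{t}{T}$ it gives $\semtype{\varnothing}{t}{T}$. By the definition of semantic typing, this means that for every $\envs{H}{\rho} \in \envtype{\varnothing}$ we have $\exps{H}{\rho}{t} \in \exptype{T}$.

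The plan is to exhibit one inhabitant of $\envtype{\varnothing}$, namely the pair of empty environments $\envs{\varnothing}{\varnothing}$, and check the three clauses of the environment interpretation.
\begin{itemize}
\item The domain condition $\DOM\varnothing = \DOM\varnothing \cup \DOM\varnothing$ holds trivially.
\item The clause for term bindings holds vacuously, since there are none.
\item The clause for type bindings also holds vacuously, since there are none. In particular, no well-formed functionality obligation $\likeFunctionType{V}$ arises here.
\end{itemize}

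Unfolding $\exptype{T}$ at this environment then yields a value $v$ with $\tevaln{\varnothing}{t}{v}$ and $v \in \valtype[\varnothing]{T}{\slroot}$. This is exactly the membership claim of the corollary.

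For the clause ``in finitely many steps'', I would appeal to the inductive nature of the big-step relation in \Cref{domrange:fig:LR-bigstep}. Any derivation of $H, t \Downarrow v$ is a finite tree. Hence the existence of such a derivation already gives termination, i.e.\ weak normalization in the big-step sense. Since the relation is an inductive definition rather than a fuel-indexed or coinductive one, no further argument is needed.

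Type soundness is read off the same conclusion: evaluation never gets stuck, because it produces a value in the interpretation of $T$. If one also wants to state the result in terms of closed types, \emph{semantic weakening} can be used to note that $\valtype[\varnothing]{T}{\slroot}$ only depends on the free variables of $T$. For a closed $T$ this is the intended set.

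I expect no real obstacle here. All of the work sits in \Cref{domrange:thm:fundamental}, and the only point needing care is that the empty environment meets every invariant of $\envtype{\varnothing}$, which it does vacuously.
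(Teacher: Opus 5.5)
Your proposal is correct and matches the paper's proof: you specialize the fundamental theorem to the empty context, observe that the pair of empty environments lies in $\envtype{\varnothing}$ because every clause holds vacuously, and unfold $\exptype{T}$ to obtain both the evaluation and the membership $v \in \valtype[\varnothing]{T}{\slroot}$. Your reading of ``in finitely many steps'' as the finiteness of the big-step derivation is also right, and the same holds if the mechanization instead uses a fuel-indexed evaluator, where termination is witnessed by a finite fuel bound.
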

\begin{proof}
  By specializing the fundamental theorem (\Cref{domrange:thm:fundamental}) to the empty environment.
\end{proof}
\section{Product Projections}  \label{domrange:sec:casestudy}

Domain and range types project arrow types, while this section turns to product types which are covariant in both components.
We enrich \domranlang with primitive pairs and, following the same recipe, product projection types $\TFst{T}$ and $\TSnd{T}$.
The extension serves three purposes:
(1) it shows that the path selection technique of \Cref{domrange:sec:LR} is not specific to function types;
(2) it shows that two projection families cooperate in one system, so a program can choose which structure to expose; and
(3) it enables a direct comparison with the type destructors \cite{DBLP:journals/iandc/HofmannP02} on the common ground of the two designs (\Cref{domrange:sec:casestudy-destructors}).

\subsection{Syntactic Extensions} \label{domrange:sec:casestudy-syntactic}

\begin{figure}[t]\small
\begin{mdframed}
\begin{minipage}[t]{1.0\textwidth}
\judgement{Syntax}{\BOX{\domranlang}}\small %
  \[\begin{array}{l@{\qquad}l@{\qquad}l@{\qquad}l}
    t       & ::= & \dots \mid \tpair{t}{t} \mid \tfst{t} \mid \tsnd{t}    &  \text{Terms}           \\ [1ex]
    S,T,U & ::= &  \dots \mid \TPair{T}{T} \mid \TFst{T} \mid \TSnd{T}   & \text{Types}                   \\[2ex]
    \end{array}\]      

\judgement{Extended Subtyping}{\BOX{\stp{\Gamma}{T}{T}}} \\
\typicallabel{strans}
  \infrule[\ruledef{domrange:spair}{s-pair}]{
    \stp{\Gamma}{S_1}{T_1} \qquad \stp{\Gamma}{S_2}{T_2}
  }{
    \stp{\Gamma}{\TPair{S_1}{T_1}}{\TPair{S_2}{T_2}}
  } 
\vspace{-0.5em}
\begin{tabular}{b{.48\linewidth}@{}b{.5\linewidth}}
  \infrule[\ruledef{domrange:sfstintro}{s-fst-intro}]{
  }{
    \stp{\Gamma}{T_1}{\TFst{\TPair{T_1}{T_2}}}
  }  
  &
  \infrule[\ruledef{domrange:sfstelim}{s-fst-elim}]{
  }{
    \stp{\Gamma}{\TFst{\TPair{T_1}{T_2}}}{T_1}
  }
  \\[0.5em]
  \infrule[\ruledef{domrange:ssndintro}{s-snd-intro}]{
  }{
    \stp{\Gamma}{T_2}{\TSnd{\TPair{T_1}{T_2}}}
  }  
  &
  \infrule[\ruledef{domrange:ssndelim}{s-snd-elim}]{
  }{
    \stp{\Gamma}{\TSnd{\TPair{T_1}{T_2}}}{T_2}
  }
  \\[0.5em]  
  \infrule[\ruledef{domrange:sfstcongr}{s-fst-congr}]{
    \stp{\Gamma}{T_1}{T_2}
  }{
    \stp{\Gamma}{\TFst{T_1}}{\TFst{T_2}}
  }  
  &
  \infrule[\ruledef{domrange:ssndcongr}{s-snd-congr}]{
    \stp{\Gamma}{T_1}{T_2}
  }{
    \stp{\Gamma}{\TSnd{T_1}}{\TSnd{T_2}}
  }
\end{tabular} \\[1.5em]

\judgement{Extended Typing}{\BOX{\hastype{\Gamma}{t}{T}}} \\
\typicallabel{tsub}
  \infrule[\ruledef{domrange:tpair}{t-pair}]{
    \hastype{\Gamma}{t_1}{T_1} \qquad 
    \hastype{\Gamma}{t_2}{T_2}
  }{
    \hastype{\Gamma}{\tpair{t_1}{t_2}}{\TPair{T_1}{T_2}}
  } 
\vspace{-1em}
\begin{tabular}{b{.48\linewidth}@{}b{.5\linewidth}}
  \color{gray}{\infrule[\ruledef{domrange:tfststd}{t-fst-std}]{
    \hastype{\Gamma}{t}{\TPair{T_1}{T_2}}
  }{
    \hastype{\Gamma}{\tfst{t}}{T_1}
  }}
  &
  \color{gray}{\infrule[\ruledef{domrange:tsndstd}{t-snd-std}]{
    \hastype{\Gamma}{t}{\TPair{T_1}{T_2}}
  }{
    \hastype{\Gamma}{\tsnd{t}}{T_2}
  }}
  \\[0.5em]
  \infrule[\ruledef{domrange:tfst}{t-fst}]{
    \hastype{\Gamma}{t}{T} \qquad 
    \stp{\Gamma}{T}{\TPair{T_1}{T_2}}
  }{
    \hastype{\Gamma}{\tfst{t}}{\TFst{T}}
  }  
  &
  \infrule[\ruledef{domrange:tsnd}{t-snd}]{
    \hastype{\Gamma}{t}{T} \qquad 
    \stp{\Gamma}{T}{\TPair{T_1}{T_2}}
  }{
    \hastype{\Gamma}{\tsnd{t}}{\TSnd{T}}
  }  
\end{tabular} \\[1.5em]

\end{minipage} %

\caption{The syntactic extensions for pairs of the \domranlang calculus.
The standard projection rules \rulefmt{t-fst-std} and \rulefmt{t-snd-std} (in {\color{gray}gray}) are derivable from \rulefmt{t-fst} and \rulefmt{t-snd} via subtyping on first and second types.
} \label{domrange:fig:casestudy-syntactic}
\end{mdframed}
\vspace{-2ex}
\end{figure}
 
\Cref{domrange:fig:casestudy-syntactic} presents the syntactic extensions.
\tpair{t_1}{t_2} constructs a pair, projections \tfst{t} and \tsnd{t} eliminate it, and pairs receive product types \TPair{T_1}{T_2}.
Analogous to domain and range types, the projection types \TFst{T} and \TSnd{T} name the components of $T$ as first-class types over arbitrary $T$.

Subtyping is extended with congruence for product types, congruence for the projections, and introduction and elimination rules.
Since both components of a product are covariant, the projection congruences \rulename{domrange:sfstcongr} and \rulename{domrange:ssndcongr} are covariant.
The introduction and elimination pairs (\eg, \rulename{domrange:sfstintro} and \rulename{domrange:sfstelim}) make the projection equivalent to the component on a precise product, $\TFst{\TPair{T_1}{T_2}} \equiv T_1$, like \rulename{domrange:sdomintro} and \rulename{domrange:sdomelim} on a precise arrow.

Typing adds the standard pair construction \rulename{domrange:tpair} and the projection rules.
Our rules \rulename{domrange:tfst} and \rulename{domrange:tsnd} type a projection at the abstract component $\TFst{T}$ or $\TSnd{T}$, requiring only that $T$ is a subtype of some product.
The component thus keeps its per-instantiation precision instead of collapsing to the bound's component.
The standard rules \rulename{domrange:tfststd} and \rulename{domrange:tsndstd} are derivable by projection elimination, so the figure shows them in gray, mirroring \rulename{domrange:tapp}.

Unlike domain and range types, however, the projection rules do demand structural information.
From a value of type $T$ alone, the system cannot conclude that $T$ has product structure, so projecting it requires a bound such as $T <: \TPair{\TTop}{\TTop}$:
\begin{lstlisting}
  type Pair = [unknown, unknown];           // top type of a pair
  function f<T extends Pair>(p : T) : Fst<T> { return p[0]; }
\end{lstlisting}
Here the bound @T extends Pair@ certifies that @T@ denotes some pair type, letting the body project @p@ while the result keeps the precise projection type @Fst<T>@.
This is the variance asymmetry of \Cref{domrange:sec:delay-covariant} realized in the rules: a product's witness must travel into the body through a bound, while a function's witness arrives with the argument.

\subsection{Cooperating Projections} \label{domrange:sec:casestudy-examples}

With the pair extension, the system has two families of projections, domain and range for functions, first and second for products.
They let a program abstract over type structure while revealing only what the code requires flexiably.

To illustrate both families at once, consider a function that maps over the first component of a pair, applying @f@ to the first component and leaving the second unchanged.
The three variants below types the same function, sharing one body @[f(p[0]), p[1]]@; they differ only in the type signature (and bounded quantification), namely, in how much structure each exposes.

The baseline names every component explicitly:
\begin{lstlisting}
  function mapFst<A, A1, B>(f : (a : A) => A1 , p : [A, B]) : [A1, B]
    { return [f(p[0]), p[1]]; }
\end{lstlisting}
Three type parameters are introduced: @A@ and @A1@ for function @f@, while @B@ for the second component left unchanged.
The body type-checks by the standard rules with no projection types at all.

We can instead make the function type abstract and recover its input and output with domain and range projections:
\begin{lstlisting}
  function mapFst1<F, B>(f : F , p : [Dom<F>, B]) : [Range<F>, B] {...}
\end{lstlisting}
The parameters @A@ and @A1@ disappear; @f@ is typed by the single type variable @F@, bounded only by @Top@.
The application goes through \rulename{domrange:tappdomrange}, which types @f(p[0])@ from @f : F@ and @p[0] : Dom<F>@ without ever exposing @F@ as an arrow type:
\[
  \inferrule*[Right=\rulelabel{domrange:tappdomrange}]{ f : F \\ \tfst{p} : \TDom{F} }{ f(\tfst{p}) : \TRange{F} }
\]
The pair structure stays concrete, so the type of second component @B@ remains explicit.

Symmetrically, we can keep the function type explicit and hide the pair behind its projections:
\begin{lstlisting}
  function mapFst2<T extends Pair, A1>(f : (a : Fst<T>) => A1 , p : T) : [A1, Snd<T>] {...}
\end{lstlisting}
Now @p@ is typed by the single type variable @T@, and @Fst<T>@ and @Snd<T>@ recover its components.
Unlike the function case, 
the projection @p[0]@ is produced from @p@, and the type checker permits this only when @T@ is known to be a pair.
The bound @T extends Pair@ (formally $T <: \TPair{\TTop}{\TTop}$) supplies that knowledge, discharged as the premise of \rulename{domrange:tfst}:
\[
  \inferrule*[Right=\rulelabel{domrange:tfst}]{ p : T \\ T <: \TPair{\TTop}{\TTop} }{ \tfst{p} : \TFst{T} }
\]
The projected @p[0] : Fst<T>@ then feeds @f@, yielding the return type @A1@.

The two derivations expose the asymmetry between the families.
A function variable needs no bound, because an argument of type $\TDom{F}$ arrives from the caller and witnesses the arrow structure (\Cref{domrange:sec:delay}).
A product projection has no such external witness, so the pair structure must be declared up front by a bound.
All three signatures share one body, with the missing structure recovered either from the argument at the call site or from the declared bound.
Recovering components under a declared bound is exactly the discipline of type destructors, and we next compare the two designs directly.

\subsection{Comparison with Type Destructors} \label{domrange:sec:casestudy-destructors}

With product types in the calculus, we can compare our projections against the type destructors \cite{DBLP:journals/iandc/HofmannP02} on the ground the two systems share.
The two designs reach the same per-instantiation precision on product types, but differ in what they ask of the type system.

Their destructors $X.1$ and $X.2$ extract the components of a product type $X$, and reasoning proceeds through the $\eta$-equivalence $X \equiv_\eta X.1 \times X.2$.
The well-formedness of destructors is governed by a \emph{kinding} judgment, which assigns every well-formed type a kind describing its shape.
A destructor $X.1$ is well-kinded only at a product kind, and a type variable reads its kind off its declared bound.
Hence $X.1$ is well-formed when the bound of $X$ certifies a product shape, while a destructor over a non-product, such as $\TInt.1$, is not a well-formed type at all.
The $\eta$-equivalence is licensed by the same kinding, and so covers every variable bounded by a product.

Our projections are instead well-formed over every type, as \domranlang has no shape kinding judgment at all.
The equivalence is recovered only when $T$ is literally a product, by the introduction and elimination rules, never from $T$ being a subtype of some product type.
The shape premise appears only in the term-level rule such as \rulename{domrange:tfst}.

\paragraph{Destructors.}
The difference surfaces on their \textit{mix} example that takes first component of its first argument and the second component of its second argument:
\[
  \textit{mix} \;:=\; \Lambda X <: \TPair{\TTop}{\TTop}.\, \lambda x : X.\, \lambda y : X.\, \tpair{(\tfst{x})}{(\tsnd{y})}
\]
With destructors, \textit{mix} types at $\TAll[X]{(\TPair{\TTop}{\TTop})}{\TFun{X}{\TFun{X}{X}}}$.
The result type $X$ rests on the $\eta$-equivalence: the bound $\TPair{\TTop}{\TTop}$ certificates product kinding of $X$, then $X \equiv X.1 \times X.2$.
The body builds a value of type $X.1 \times X.2$, which their system converts back to $X$ through equivalence.

Ours, however, cannot type this signature.
For an abstract $X$ the equivalence is unavailable, since $X$ is only a subtype of a product, and no rule can derive $\TPair{\TFst{X}}{\TSnd{X}} <: X$.
We type the same term with the result expanded manually,
\[
  \textit{mix} \;:\; \TAll[X]{(\TPair{\TTop}{\TTop})}{\TFun{X}{\TFun{X}{\TPair{\TFst{X}}{\TSnd{X}}}}}
\]
The precision is still preserved, as any valid instantiation $X := \TPair{A}{B}$ makes $\TPair{\TFst{X}}{\TSnd{X}} \equiv \TPair{A}{B} \equiv X$.
The cost of dropping the $\eta$-equivalence is notational, as an expanded result type.

\paragraph{Projections.}
In exchange, projections requires the type system nothing at the definition site.
Without shape kinding, $\TFst{\TBool}$ is still a well-formed but uninhabited type, exactly as $\TDom{\TInt}$ is.
A signature may therefore mention projections of a boundless type variable: 
a component of type $\TFst{T}$ can be received, stored, or consumed by an arrow such as $\TFun{\TFst{T}}{A}$ while $T$ itself stays bounded by $\TTop$;
and the bound $T <: \TPair{\TTop}{\TTop}$ is owed only where a value of type $T$ is actually projected, as in @mapFst2@.
Such freedom is what lets the two projection families mix in one signature, as @mapFst1@ nests $\TDom{F}$ inside a product type and @mapFst2@ feeds $\TFst{T}$ to an arrow type.
Most importantly, it is what makes boundless quantification possible at all: the ($\Type{Dom}$ and $\Type{Cod}$) destructors they contemplated would demand an arrow shape bound such as $F <: \TFun{\TBot}{\TTop}$ at every quantifier, 
which is exactly the bound \domranlang set out to remove.

\subsection{Semantic Extensions}

\begin{figure}[t]\small
\begin{mdframed}
\begin{minipage}[t]{1.0\textwidth}
\judgement{Values and Path Selectors}{\BOX{\domranlang}}\small %
  \[\begin{array}{r@{\qquad}l@{\qquad}l@{\qquad}l}
    v       & ::= &  \dots \mid \vpair{v}{v}   &  \text{Values}           \\ [1ex]
    \pathselection{sl} & ::= &  \dots \mid \pathselection{\slfst} \mid \pathselection{\slsnd}   & \text{Path Selectors}           \\[1ex]
    \possl{(\slappend{\slfst}{s})} = \possl{s} & & \hspace{-2ex} \possl{(\slappend{\slsnd}{s})} = \possl{s}  & \text{Path Positivity} \\[2ex]
  \end{array}\]    

\judgement{Big-Step Reduction}{\BOX{\strut H,\,t \Downarrow v }}\\
\typicallabel{strans}
\begin{tabular}{b{.38\linewidth}@{}b{.30\linewidth}@{}b{.30\linewidth}}
  \infrule[\ruledef{domrange:epair}{e-pair}]{
    H,\,t_1 \Downarrow v_1 \qquad
    H,\,t_2 \Downarrow v_2 
  }{
    H,\,\tpair{t_1}{t_2} \Downarrow \vpair{v_1}{v_2}
  }
  &
  \infrule[\ruledef{domrange:efst}{e-fst}]{
    H,\,t \Downarrow \vpair{v_1}{v_2}
  }{
    H,\,\tfst{t} \Downarrow v_1
  }
  &
  \infrule[\ruledef{domrange:esnd}{e-snd}]{
    H,\,t \Downarrow \vpair{v_1}{v_2}
  }{
    H,\,\tsnd{t} \Downarrow v_2
  }
\\[1.5em]
\end{tabular}

\judgement{Well-Formed Pair}{\BOX{\likePairType{V}}}%
  
\[\begin{array}{l@{\quad}l@{\quad}l@{}l}
  \likePairType{V} &\defeq& \forall v_1,\,v_2,\, \pathselection{s},\, 
      \vpair{v_1}{v_2} \in \withpath{V}{s} \to v_1 \in \withpath{V}{\slprepend{s}{\slfst}} \land v_2 \in \withpath{V}{\slprepend{s}{\slsnd}} \\[2ex] 
\end{array}\]

\end{minipage} %

\caption{The big-step reductions and well-formedness for pairs from \domranlang calculus.} \label{domrange:fig:casestudy-reduction}
\end{mdframed}
\vspace{-2ex}
\end{figure}

\begin{figure}[t]\small
\begin{mdframed}
\begin{minipage}[t]{1.0\textwidth}

\judgement{Value Interpretations (Updated)}{\BOX{\valtype{T}{s}}}\small %
  \[\begin{array}{l@{\qquad}l@{\qquad}l}
    \valtype{\TAll{U}{T}}{\slroot} & = & \valtypeset[\vtabs{H}{t}]{
      \forall V,\, \likeFunctionType{V} \to \likePairType{V} \to \vstp{V}{\valtype{U}{}} \to \\
      & & \qquad \qquad \qquad  \exists v,\,\tevaln{H}{t}{v} \land v \in \valtype[(\rho,\,(X, V))]{T}{\slroot}
    } \\[1ex]

    \valtype{\TPair{T_1}{T_2}}{\slroot} & = & \valtypeset[]{
      \exists v_1,\, v_2,\, v = \vpair{v_1}{v_2}\, \land v_1 \in \valtype[]{T_1}{\slroot} \land v_2 \in \valtype[]{T_2}{\slroot}
    } \\[0.3ex]
    \valtype{\TPair{T_1}{T_2}}{\slappend{\slfst}{s}} & = & \valtype{T_1}{s} \\[0.3ex]
    \valtype{\TPair{T_1}{T_2}}{\slappend{\slsnd}{s}} & = & \valtype{T_2}{s} \\[1ex]

    \valtype{\TFst{T}}{s}  & = & \valtype{T}{\slappend{\slfst}{s}} \\[0.3ex]
    \valtype{\TSnd{T}}{s}  & = & \valtype{T}{\slappend{\slsnd}{s}} \\[0.3ex]
    \end{array}\]    

\judgement{Environment Interpretation (Updated)}{\BOX{\envtype{\Gamma}}}
\[\begin{array}{l@{\quad}l@{\quad}l@{}l}
  \envtype{\Gamma} &\defeq& \envtypeset[]{\DOM \Gamma = \DOM H\, \cup\, \DOM \rho \, \land \\ 
    & & \qquad\qquad (\forall x,\, T,\, x : T \in \Gamma \to \exists v,\, H(x) = v \land v \in \valtype[]{T}{\slroot})\, \land \\
    & & \qquad\qquad (\forall X,\, U,\, X <: U \in \Gamma \to \exists V,\, \rho(X) = V \land \vstp{\valtype[]{X}{}}{\valtype[]{U}{}} \land \likeFunctionType{V} \land \likePairType{V} )
  } \\[2ex] 
\end{array}\]

\end{minipage} %

\caption{The well-formedness, value interpretations, and updated environment types for pairs.} \label{domrange:fig:casestudy-semantic}
\end{mdframed}
\vspace{-2ex}
\end{figure} 
We extend the semantic interpretations and reduction rules to establish soundness and weak normalization for the extended system.
\Cref{domrange:fig:casestudy-reduction} presents the new path selectors, the big-step rules for pairs, and the well-formedness property for product types.
The big-step rules are standard,
where two components evaluate independently to values, forming the pair value $\vpair{v_1}{v_2}$, and the projections \tfst{t} and \tsnd{t} then evaluate to the corresponding components.

\paragraph{Path Selectors.}
Analogous to the domain and range selectors, the first selector (\pathselection{\slfst}) and the second selector (\pathselection{\slsnd}) let the logical relation delay the interpretation of product projections until product structure becomes available.
Since both components of a product are covariant, the new selectors preserve path positivity.

\paragraph{Well-Formedness.}
The well-formed pair property \likePairType{V} characterizes how value interpretations that behave like some interpretation of product types. 
If a pair value $\vpair{v_1}{v_2}$ belongs to $V$, then $V$ admits the corresponding component projections.
It plays the role that well-formed functionality \likeFunctionType{V} plays for arrow types, allowing the interpretation to stay delayed until a pair witness is observed, 
at which point the underlying structure is revealed for projection.

\paragraph{Value Interpretations.}
The value interpretation of product types in \Cref{domrange:fig:casestudy-semantic} follows the design for arrow types.
At the root path, a product type is interpreted as the pair values whose components inhabit the interpretations of $T_1$ and $T_2$.
When the head of the selection path is \pathselection{\slfst} or \pathselection{\slsnd}, the interpretation extracts the corresponding component interpretation.
With the new invariant \likePairType{V}, the interpretation of bounded quantification and the semantic environment additionally require recorded value interpretations to satisfy it, exactly as with well-formed functionality.

\subsection{Metatheory}

The metatheory follows the proof structure of the base calculus.
Since $\TFst{T}$ and $\TSnd{T}$ behave like the covariant range type, most proofs extend directly; we highlight the key lemmas and refer to our artifact for the complete development.

\begin{lemma}[semantic well-formed pairs] \label{domrange:thm:likePairType}
  If $\,\envs{H}{\rho} \in \envtype{\Gamma}$, then for any type $T$, $\likePairType{\valtype{T}{}}$.
\end{lemma}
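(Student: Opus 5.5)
The proof goes by structural induction on $T$, with the selection path $\pathselection{s}$ and the pair components $v_1, v_2$ universally quantified in the induction hypothesis. This follows the proof of the semantic functionality lemma, with $\likePairType{\cdot}$ in place of $\likeFunctionType{\cdot}$.

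I first establish two small facts about prepending, each by induction on $\pathselection{s}$.
\begin{enumerate}
\item Prepending commutes with appending: $\slprepend{(\slappend{sl}{s})}{\slfst} = \slappend{sl}{(\slprepend{s}{\slfst})}$, and likewise for $\pathselection{\slsnd}$.
\item Prepending $\pathselection{\slfst}$ or $\pathselection{\slsnd}$ preserves positivity, i.e.\ $\possl{(\slprepend{s}{\slfst})} = \possl{s}$. This holds because these selectors are positive, and positivity is a parity over the selectors in the path, independent of their order.
\end{enumerate}
Both are immediate once stated. Note also that a prepended path is never the root $\slroot$.

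With these in hand, the cases of the induction are as follows.
\begin{itemize}
\item \emph{Variable $X$.} The updated environment interpretation directly provides $\likePairType{\rho(X)}$.
\item \emph{Projection types $\TDom{T'}$, $\TRange{T'}$, $\TFst{T'}$, $\TSnd{T'}$.} Unfold the interpretation to $\valtype{T'}{\slappend{sl}{s}}$ and rewrite the prepended path with fact (1). The goal is then exactly the induction hypothesis for $T'$ at the longer path $\pathselection{\slappend{sl}{s}}$; this is why the hypothesis must quantify over all paths.
\item \emph{Arrow $\TFun{T_1}{T_2}$.} At the root the interpretation contains only closures, so the premise about a pair value is vacuous. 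At $\pathselection{\slappend{\sldom}{s'}}$ or $\pathselection{\slappend{\slrange}{s'}}$, the interpretation reduces to that of $T_1$ or $T_2$ at $\pathselection{s'}$. By fact (1), the prepended path reduces the same way, and the induction hypothesis applies. If $\pathselection{s'}$ has the form $\pathselection{\slappend{\slfst}{s''}}$, the non-root arrow clause must also be specified; I expect it to follow the $\TTop$/$\TBot$ default, as for base types.
\item \emph{Product $\TPair{T_1}{T_2}$.} At the root, $\vpair{v_1}{v_2} \in \valtype{\TPair{T_1}{T_2}}{\slroot}$ yields $v_1 \in \valtype{T_1}{\slroot}$. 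This is exactly $\valtype{\TPair{T_1}{T_2}}{\slappend{\slfst}{\slroot}}$, the prepended root path, and similarly for $v_2$. At non-root paths headed by $\pathselection{\slfst}$ or $\pathselection{\slsnd}$, use fact (1) and the induction hypothesis, as in the arrow case.
\item \emph{$\TTop$ and $\TBot$.} At a positive path, $\valtype{\TTop}{s}$ is all values; by fact (2) the prepended path is also positive, so the goal is again all values. At a negative path, $\valtype{\TTop}{s}$ is empty and the premise is vacuous. $\TBot$ is symmetric.
\item \emph{Base types and quantifiers.} At the root they contain only constants or type closures, never pairs, so the premise is vacuous. At non-root paths they behave as $\TTop$ or $\TBot$ according to positivity. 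Since the prepended path is non-root with the same positivity by fact (2), the argument for $\TTop$/$\TBot$ carries over.
\end{itemize}

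The main obstacle is the bookkeeping on paths, namely ensuring each interpretation clause fires identically on $\pathselection{s}$ and its prepended form. This matters most for the non-root default clauses, which depend on positivity and on being non-root, and for the root product case, where the prepended root path $\pathselection{\slappend{\slfst}{\slroot}}$ must match the selector clause of the product interpretation. Once the two path facts are stated and proved by induction on $\pathselection{s}$, every case is either vacuous, a direct unfolding, or an instance of the induction hypothesis.
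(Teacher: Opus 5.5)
Your proposal is correct and takes essentially the same route as the paper: structural induction on $T$ with the path generalized, the variable case from the environment interpretation, the root product case by inversion, and the remaining cases either vacuous or by the induction hypothesis. Your two path facts (prepending commutes with appending and preserves positivity) and your handling of the mismatched-selector clauses, which the paper's figures leave unspecified, spell out the bookkeeping that the paper's one-line proof leaves implicit.
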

\begin{proof}
  By induction on the type $T$. In the type variable case, the environment interpretation guarantees the invariant for every recorded interpretation. In the other base cases, pair values arise only in interpretations of product types, which satisfy the invariant; the inductive cases follow from the induction hypothesis.
\end{proof}

The lemma guarantees that whenever a pair value appears in a value interpretation, the interpretation can be treated as that of a product type; such pair witnesses arise from inversion on the interpretation of product types.

\begin{lemma}[semantic typing of fst projection]
  If $\,\semtype{\Gamma}{t}{T}$, and if there exist $T_1$, $T_2$ such that $\stp{\Gamma}{T}{\TPair{T_1}{T_2}}$, then $\semtype{\Gamma}{\tfst{t}}{\TFst{T}}$.
\end{lemma}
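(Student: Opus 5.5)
We unfold the semantic typing judgment and fix an arbitrary $\envs{H}{\rho} \in \envtype{\Gamma}$. From $\semtype{\Gamma}{t}{T}$ we obtain a value $v$ with $\tevaln{H}{t}{v}$ and $v \in \valtype{T}{\slroot}$. Next we apply the semantic subtyping lemma (\Cref{domrange:thm:semantic-subtyping}) to $\stp{\Gamma}{T}{\TPair{T_1}{T_2}}$, which gives $\vstp{\valtype{T}{}}{\valtype{\TPair{T_1}{T_2}}{}}$. Instantiating the $\slleq$ definition at the positive root path yields $v \in \valtype{\TPair{T_1}{T_2}}{\slroot}$. Inverting the root interpretation of products then gives $v = \vpair{v_1}{v_2}$ with $v_1 \in \valtype{T_1}{\slroot}$ and $v_2 \in \valtype{T_2}{\slroot}$. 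Rule \rulename{domrange:efst} therefore gives $\tevaln{H}{\tfst{t}}{v_1}$, which settles the evaluation half of the claim.

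It remains to show $v_1 \in \valtype{\TFst{T}}{\slroot}$, which by definition means $v_1 \in \valtype{T}{\slappend{\slfst}{\slroot}}$. Here we will not use the semantic subtyping towards the product. Going from $\TPair{T_1}{T_2}$ back down to $T$ would be the wrong direction on a positive path, and $T$ could be strictly smaller, for instance a type variable. Instead we use the witness $\vpair{v_1}{v_2} \in \valtype{T}{\slroot}$ directly, together with the semantic well-formed pairs lemma (\Cref{domrange:thm:likePairType}), which gives $\likePairType{\valtype{T}{}}$. Instantiating $\likePairType{}$ at the path $\slroot$ yields $v_1 \in \valtype{T}{\slprepend{\slroot}{\slfst}}$. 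At the root, prepending and appending coincide: $\slprepend{\slroot}{\slfst} = \slappend{\slfst}{\slroot}$. This is exactly the required membership.

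The step I expect to be the real obstacle is not in this lemma but in \Cref{domrange:thm:likePairType}, which the argument leans on. In particular, we must check that its proof carries through the prepending convention for arrow, quantifier, base, $\TTop$/$\TBot$ and projection cases at arbitrary paths $\pathselection{s}$. For example, the $\TFst{T'}$ case needs the statement for $T'$ at path $\slappend{\slfst}{s}$ after prepending. Within this proof, the only delicate point is the syntactic identity $\slprepend{\slroot}{\slfst} = \slappend{\slfst}{\slroot}$. We will discharge it by unfolding the path operations; in the mechanization it may need a small rewriting lemma about prepending to the root.
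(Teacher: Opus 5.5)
Your proof is correct and follows the paper's route. You use semantic subtyping to $\TPair{T_1}{T_2}$ at the root path to invert $v$ into a pair, then apply \Cref{domrange:thm:likePairType} to $\valtype{T}{}$ at $\slroot$, where $\slprepend{\slroot}{\slfst} = \slappend{\slfst}{\slroot}$ gives $v_1 \in \valtype{\TFst{T}}{\slroot}$; by keeping the single value $v$ from $\semtype{\Gamma}{t}{T}$ throughout, you also make explicit that the pair witness lies in $\valtype{T}{\slroot}$, which the paper's terse sketch leaves implicit.
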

\begin{proof}
  From \Cref{domrange:thm:stp-fundamental}, we have $\semtype{\Gamma}{t}{\TPair{T_1}{T_2}}$. By inversion, $t$ reduces to a pair value, and the well-formedness property from \Cref{domrange:thm:likePairType} applies.
\end{proof}

The lemma establishes the projection case of the fundamental theorem, and the remaining cases follow similarly.

\subsection{Summary}

The extension required only new selectors, a well-formedness invariant, and value interpretations for product types, with the rest of the development untouched.
Path selection thus generalizes beyond arrow types, and the same pattern should extend further to other composite type constructors.
Programs, in turn, may freely mix multiple projection families, exposing only the structure each part of the code requires.
\section{Discussion and Future Work}  \label{domrange:sec:discussion}

We discuss possible extensions, destructors we do not recover, and the practical reach.

\subsection{Higher-Kinded Types and \texttt{typeof}} \label{domrange:sec:discussion-typeof}

Appendix Section A extends \domranlang with type operators, obtaining \hkdomranlang, the calculus \FOmegaSub equipped with domain and range projections.
The main technical development there is the higher-kinded logical relation, while the treatment of the projections themselves carries over with little change.
The two features thus grow in parallel rather than entangled, which we read as further evidence that projection types are non-invasive to extensions of \Fsub.

A careful reader may notice that the TypeScript examples in \Cref{domrange:sec:overview-tsrecap} pair the utility types with the @typeof@ operator, extracting the type of a term variable as in @ReturnType<typeof setTimeout>@.
The operator admits a formal account, and we see two different routes.
One adds a type $\Type{typeof}(x)$ together with the subtyping $\Type{typeof}(x) <: \Gamma(x)$, reading the type recorded for $x$ off the context (further upcasts follow by transitivity), and the term variable typing $x : \Type{typeof}(x)$ instead of direct lookup.
In our terms, this delays the typing of term variables to a combination of typing and subtyping.
The other route makes every abstraction bind a companion type variable alongside its term parameter, so that $\Type{typeof}(x)$ is nothing but that type variable.
Either extension appears routine, but both would distract from the core contribution, so we leave them to future work.

\subsection{The Remaining Type Destructors} \label{domrange:sec:discussion-destructors}

\Cref{domrange:sec:casestudy} recovers the product fragment of the type destructors of \citet{DBLP:journals/iandc/HofmannP02}, whose system also treats existential and recursive types.
For existentials, one application they sketch for their destructor is the encoding of type operators,
while \hkdomranlang provides type operators directly through the parallel development just discussed.
Existential types themselves, first-class and eliminated by a term-level $\Type{unpack}$, we expect to integrate as smoothly as pairs did.
Recursive types we skip deliberately, as admitting them forces a step-indexed logical relation and weaken our metatheory by forfeiting weak normalization.
More broadly, our contribution is the contravariant arrow-type case, and a systematic recovery of the full destructor suite, however parallel, would say little more about it.

\subsection{Practical Impact} \label{domrange:sec:discussion-practical}

Both extensions display the same trait, that the projections attach to a richer calculus without disturbing it. 
We expect languages built on a core of \Fsub to benefit from domain and range types.

The delayed arrow obligation may also carry further than types.
Effect systems face an analogous definition-site problem, as the effect of a higher-order function includes the latent effect of its formal parameter, and conventional effect polymorphism names that latent effect as an explicit effect variable in every signature.
\citet{DBLP:conf/ecoop/RytzOH12} sketch a lightweight alternative, an annotation @pure(f)@ declaring a function pure except for the latent effect of its parameter @f@, so the effect stays unnamed at the definition site.
Such relative effects are natural in our setting.
Projecting the latent effect out of $f$'s function type plays the role of @pure(f)@, the type is never exposed as an effect-annotated arrow type at the definition site, and the effect resolves at the call site once a concrete arrow type arrives.
The delay is in fact easier here than for domain types, since effects sit covariantly like the range.
Effects that refer to the parameter itself, as @pure(f)@ does, would further combine the effect projection with the @typeof@ mechanism of \Cref{domrange:sec:discussion-typeof}.
\section{Related Work} \label{domrange:sec:related-work}

System \Fsub is introduced by \citet{DBLP:journals/csur/CardelliW85} and
formalized by
\citet{DBLP:journals/mscs/CurienG92,DBLP:journals/iandc/CardelliMMS94}.
It extends the polymorphic $\lambda$-calculus System~$\mathsf{F}$
\cite{girard1972interpretation, DBLP:conf/icalp/Reynolds74} with bounded
quantification $\TAll[]{U}{T}$, restricting $X$'s instantiations to subtypes of
$U$, thereby providing structural information about $X$ by exposing it to $U$.
Our \domranlang calculus builds on the full variant of \Fsub,
but our domain and range type projections need
no informative bound for arrow types; $<: \TTop$ suffices.
This enables $\eta$-expansion wrappers to be expressed without the precision
loss seen in \Cref{domrange:sec:overview-tension}.  
System \FOmegaSub is an extension on \Fsub with higher-kinded type operators, we also adopt our work to the \FOmegaSub development \cite{DBLP:journals/tcs/PierceS97} in appendix Section A.
Below, we examine related
works on reasoning about the structures of types in dynamic, gradual, and static
settings, respectively.

\subsection{Dynamically/Gradually Typed Systems}

\paragraph{TypeScript.}
TypeScript extends JavaScript with optional static
typing~\cite{goldberg2022learning}, supporting incremental annotation of
dynamically typed programs. Our work is motivated by its utility types
\cite{typescriptUtilityDocs}, in particular @ReturnType@ and @Parameters@, which
project the result and argument types of a function type.
We find these utilities widely used in open-source repositories (see appendix Section B), while their definitions rely on the unsafe escape
hatch @any@.
Our goal here is not to formalize or evolve all of TypeScript, but to isolate
this recurring idiom and to give it a sound foundation, which would apply
equally to other languages with similarly expressive type systems.

\paragraph{Python.}
Along the lines of how TypeScript programmers use @Parameters@,
Python allows forwarding function argument types via @ParamSpec@ \cite{pep612}:
given @P@ as an instance of @ParamSpec@, programmers can access the parameter
list as @P.args@ and @P.kwargs@;
with @P@ and @R@ bound as type variables, the type of a decorator can be
expressed as @Callable[P, R] -> Callable[P, R]@. Note that two variables (@P,R@)
and the arrow shape (@Callable@) are committed to at the definition site,
which is precisely the surface syntax burden we try to relieve via
@Dom@/@Range@ type projections.
Variadic generics \cite{pep646} are similarly designed to forward a variable
number of type parameters, which does not deal with keyword arguments
(@.kwargs@ in @ParamSpec@) but is not specific to function parameters, either.

\paragraph{Gradual Typing.}
It is worth first clarifying that neither TypeScript nor Python is a
\emph{sound} gradual type system: both of them have escape hatches to opt out
of type checking entirely, @any@ in TypeScript and @Any@ in Python.
Orthogonal to our work on statically safe function projections, 
several works study the formal foundations and extend the safety guarantees of 
TypeScript in gradual settings
\cite{DBLP:conf/ecoop/BiermanAT14,DBLP:conf/popl/RastogiSFBV15,DBLP:conf/pldi/VekrisCJ16,richards_et_al:LIPIcs.ECOOP.2015.76}.
Interestingly,
\citet{DBLP:journals/pacmpl/GierczakMDA24} observe that \emph{transient}
gradually typed languages should be \emph{vigilant} for a \emph{tag} type
system, where tags represent only top-level shapes of types.
This aligns with our observation that seeing the top-level arrow
(in our work from @Dom@) suffices to conduct safe function applications
at the shallowest level.
However, with polymorphism omitted and deeper type information erased,
their work likely exhibits non-trivial gaps compared to a static, polymorphic
solution like our \domranlang.

\subsection{Static Type-level Computation}

\paragraph{Type Destructors.}
Sharing a similar motivation to improve the precision of \Fsub regarding
early exposure,
\citet{DBLP:journals/iandc/HofmannP02}
present $\mathsf{F}^{\textit{TD}}_{<:}$ highlighting \emph{type destructors},
with which we have compared in \Cref{domrange:sec:casestudy-destructors}.
They introduce destructors for \emph{covariant} positions only---the components
of products, the body of existentials, the unfolding of recursive types---and
explicitly omit destructors for arrows where \emph{contravariance} is involved,
even though they have named the omitted ones @Dom@ and @Cod@.
They cite ``difficult metatheoretic problems'' including algorithmic concerns
for subtyping (orthogonal to this work), and the lack, at the time, of
compelling applications.
This work is thus complementary:
we identify type forwarding patterns in TypeScript as an interesting use case,
and we deliver the omitted @Dom@ and @Range@ projections,
a declarative subtyping theory for them,
an application rule that delays resolving function types,
with a mechanized semantic type soundness proof
using polarity-based path selectors.
We show that our mechanism generalizes by revisiting their covariant product
destructors in \Cref{domrange:sec:casestudy} and discuss the rest in \Cref{domrange:sec:discussion-destructors}.

\paragraph{Type Constructors/Operators.}
Our projection types $\TDom{}$/$\TRange{}$ look similar to type constructors
or type operators a la System \FOmega and \FOmegaSub
\cite{girard1972interpretation,DBLP:journals/tcs/PierceS97}, but they are
largely different.
Crucially, type constructors/operators make new types, whereas type
destructors/projections resolve to existing components of given types.
We show that our \domranlang can be orthogonally combined with higher-kinded
subtyping in appendix Section A.

Extended forms of type operators add expressiveness, yet we find no evidence
that they allow encoding destructors or projections.
F-bounded polymorphism \cite{DBLP:conf/fpca/CanningCHOM89} lets a type
variable's bound mention the variable itself (e.g.,
$T <: \Type{Comparable}[T]$), which is the right tool for \emph{describing} what
shape a variable must inhabit.
\citet{DBLP:journals/pacmpl/StuckiG21} replace bounded quantification in
\FOmegaSub with interval kinds, allowing
a unified treatment of important type- and kind-level abstraction mechanisms.
Neither of them directly yields projection forms like $\TDom{F}$.

Some languages can encode projection types, but only by wielding mechanisms far
more powerful than higher-kinded types. C++ is the
extreme case: through recursive template instantiation and specialization, its
template system is believed to be
Turing-complete~\cite{veldhuizenTemplatesAreTuring2003},
supporting arbitrary (and undecidable) type-level computation,
witnessed by the folklore demonstration of Unruh's 1994 program that
prints prime numbers in error messages~\cite{unruh1994prime}.
Scala~3 and its match types stand at a more interesting middle point, which we
analyze below.

\paragraph{Match Types.}
Scala~3 match types \cite{DBLP:journals/pacmpl/BlanvillainBKO22} can \emph{encode} a domain projection
operationally: @type Dom[F] = F match { case (a => b) => a }@.
However, such a match reduces only when its scrutinee is concrete enough; for an
abstract @F@ it stalls.
Using @Dom[F]@ then forces a bound that exposes @F@'s arrow shape, bringing
back the same early-exposure problem seen in
\Cref{domrange:sec:overview-tension}.
As a concrete example:
\begin{lstlisting}
type Dom[F] = F match { case (a => b) => a }
def apply1[F](f: F, x: Dom[F]) = f(x)                   // ERROR: irreducible Dom[F]
def apply2[F <: Int=>String](f: F, x: Dom[F]) = f(x)    // ERROR: expect Int, got Dom[F]
def apply3(f: Int=>String, x: Dom[Int=>String]) = f(x)  // ok but monomorphic
\end{lstlisting}
Only the fully monomorphic @apply3@ type-checks, which is the least interesting case.
Our $\TDom{}$ and $\TRange{}$ are instead primitive projections that compose meaningfully
on polymorphic types and do not require arrows to be exposed.

\paragraph{Type Soundness and Set-Theoretic Types.}
We establish \emph{semantic type soundness} for \domranlang via logical
relations \cite{LogicalTypeSoundness}, rather than syntactically deriving
\emph{progress and preservation} theorems.
The operational semantics of \domranlang is defined in the big-step style, as a
definitional interpreter
\cite{DBLP:journals/lisp/Reynolds98a,DBLP:conf/popl/AminR17}.
Our path-selection technique is inspired by the strong normalization proof of
Dependent Object Types (DOT) \cite{DBLP:conf/ecoop/WangR17}, which uses
selectors for path-sensitive reasoning about type members. We instead apply
selectors to function types to reason about projections in a delayed manner.

While semantic type soundness proofs often interpret types as
\emph{sets of possible terms (or values)},
they should not be confused with \emph{set-theoretic types}
\cite{DBLP:journals/jacm/FrischCB08,DBLP:conf/popl/Castagna0XILP14,DBLP:journals/programming/CastagnaDV23},
which include set operations $\vee$, $\wedge$, $\neg$ as primitive type formers.
It would then be tempting to suggest that the set-theoretic reading of functions
(as sets of pairs) makes projections like @Dom@ and
@Range@ trivial, but we note that building a consistent set-theoretic
type system is non-trivial, given cardinality concerns
\cite{DBLP:journals/siamcomp/Scott76} and Reynolds's theorem on polymorphism
\cite{DBLP:conf/sdt/Reynolds84}.
Our formalization of \domranlang is \emph{not} set-theoretic:
we interpret function types normally on the root selection path, and we use
additional selectors to represent domain and range separately.
 
\section{Conclusion} \label{domrange:sec:conclusion}

We presented \domranlang, a conservative extension of System \Fsub with first-class domain and range projection types and a redesigned application rule.
The rule delays the arrow obligation to the call site, 
so a function type variable needs only the bound $\TTop$ while its applications still type soundly and precisely,
forming the boundless quantification.
The standard application rule stays derivable, and every System \Fsub program remains typeable.
We mechanized \domranlang in Rocq and proved semantic type soundness and weak normalization by logical relations, 
where path selection delays a projection's interpretation until the underlying arrow type is resolved.
Product projections extend the approach beyond arrow types, 
cooperating with domain and range in one signature and reaching the same per-instantiation precision as the type destructors of \citet{DBLP:journals/iandc/HofmannP02}.
We expect the discipline to carry to practical languages with subtyping, 
realizing their projection idioms soundly and letting functions speak for themselves.

\section*{Data Availability Statement}
Rocq mechanizations (including higher-kinded \hkdomranlang) can be found in the supplemental materials.

\bibliography{references}

\newpage
\appendix

\end{document}